\def\HE{\text{H1}}
\def\HT{\text{H2}}
\def\HD{\text{H3}}
\def\AE{\text{A1}}
\def\AT{\text{A2}}
\def\QE{\text{Q1}}
\def\QT{\text{Q2}}
\def\QD{\text{Q3}}
\def\QV{\text{Q4}}
\def\t#1{\widetilde{#1}}
\def\h#1{\widehat{#1}}
\def\th#1{\widehat{\widetilde{#1}}}
\def\b#1{\overline{#1}}
\def\uh#1{\underset{\text{$\hat{\phantom{\cdot}}$}}{#1}}
\newcounter{NN}
\newtheorem{remark}[NN]{Remark}
\newtheorem{theorem}[NN]{Theorem}
\newtheorem{lemma}[NN]{Lemma}
\begin{document}
\title{From auto-B\"acklund transformations to auto-B\"acklund transformations, and torqued ABS equations}
\author{
Dan-da Zhang$^1$, Da-jun Zhang$^2$, Peter H.~van der Kamp$^3$\\[2mm]
$^1$School of Mathematics and Statistics, Ningbo University, Ningbo 315211, China\\
zhangdanda@nbu.edu.cn,
https://orcid.org/0000-0001-6406-6672\\
$^2$Department of Mathematics, Shanghai University, Shanghai 200444, China\\
djzhang@staff.shu.edu.cn,
https://orcid.org/0000-0003-3691-4165\\
$^3$Department of Mathematics and Statistics, La Trobe University, Victoria 3086, Australia\\
P.vanderKamp@LaTrobe.edu.au,
https://orcid.org/0000-0002-2963-3528}

\date{}

\maketitle

\begin{abstract}
We provide a method which from a given auto-B\"acklund transformation (auto-BT) produces another auto-BT for a different equation. We apply the method to the natural auto-BTs for the ABS quad equations, which gives rise to torqued versions of ABS equations and explains the origin of each auto-BT listed in [J. Atkinson, J. Phys. A: Math. Theor. 41 (2008) 135202]. The method is also applied to non-natural auto-BTs for ABS equations, which yields 3D consistent cubes which have not been found in [R. Boll, J. Nonl. Math. Phys. 18 (2011) 337--365], and to a multi-quadratic ABS* equation giving rise to a multi-quartic equation. \\[1mm]
Keywords: 3D consistency, Auto-B\"acklund transformation, quad equation, superposition principle, tetrahedron property, planar, torqued equation. \\[1mm]
MSC class: 37K60
\end{abstract}

\section{Introduction}
Consistency Around the Cube (CAC) is a central notion in the study of discrete integrable systems.
Three quadrilateral equations $A=B=Q=0$\footnote{This is short-hand notation for the system $\{A=0, B=0, Q=0\}$,
which is complemented by the equations on the opposite faces $\{\h{A}=0, \t{B}=0, \overline{Q}=0\}$,
where tilde and hat denote shifts in different directions, e.g.
if $u=u(n,m)$ and $A=f(u)$, then $\uh{\t{A}} = f(u(n+1,m-1))$, and $\bar{u}=v$.}, each posed on two opposite faces of a cube, as in Figure \ref{1a},
is called CAC (or 3D consistent) if values can be assigned to $u,v$ and their shifts, such that each equation is satisfied \cite{ABS03,Atk08,AtkN-IMRN-2014,LY,Nij02,NW01,XP}. For multi-linear equations the property is usually characterised as an initial value problem, where $\th v$ is determined uniquely from $u,\t u, \h u, v$. It is possible to define different equations on opposite faces of the cube \cite{RB11,ZVZ}, however we will not consider that situation in this paper.

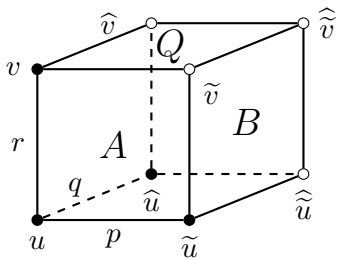
\begin{figure}[h]
	\centering
	\begin{tikzpicture}[scale=.5]
	\tikzstyle{nod1}= [circle, inner sep=0pt, fill=white, minimum size=4pt, draw]
	\tikzstyle{nod}= [circle, inner sep=0pt, fill=black, minimum size=4pt, draw]
	\def\lx{3}%
	\def\ly{1.22}%
	\def\l{4}%
	\def\d{4}%
\node[nod] (u00) at (0,0) [label=below: $u$] {};
	\node[nod] (u10) at (\l,0) [label=below: $\widetilde{u}$] {};
	\node[nod] (u01) at  (\lx,\ly) [label=below: $\widehat{u}$] {};
	\node[nod1] (u11) at (\l+\lx,\ly) [label=below: $\widehat{\widetilde{u}}$] {};
	\node[nod] (v00) at (0,\d) [label=left: $v$] {};
	\node[nod1] (v10) at (\l,\d) [label=below right: $\widetilde{v}$] {};
	\node[nod1] (v01) at (\lx,\d+\ly) [label=left: $\widehat{v}\ \ $] {};
	\node[nod1] (v11) at (\l+\lx,\d+\ly) [label=right: $\widehat{\widetilde{v}}$] {};

\draw (\l/2,-.5) node {$p$};
\draw (-.5,\d/2) node {$r$};
\draw (\lx/2-.5,\ly/2+.25) node {$q$};

\draw[font=\Large]  (\l/2,\d/2) node {$A$};
\draw[font=\Large]  (\l+\lx/2,\d/2+\ly/2) node {$B$};
\draw[font=\Large]  (\l/2+\lx/2,\d+\ly/2) node {$Q$};

	\draw[thick]  (u00) -- (u10) -- (u11) -- (v11) -- (v10);
	\draw[thick]  (u00) -- (v00) -- (v01) -- (v11);
	\draw[thick]  (v00) -- (v10) -- (u10);
	\draw[thick,dashed]  (u00)  -- (u01) -- (u11);
	\draw[thick,dashed]  (u01) -- (v01);
	\end{tikzpicture}
\caption{Consistency: quad equations can be posed on the six faces of a cube,
we have $A=0$ on the front and back faces, $B=0$ on the left and right faces,
and $Q=0$ on the top and bottom faces. The black dots indicate initial values.
Each equation depends on two of the three lattice parameters $p,q,r$.} \label{1a}
\end{figure}

B\"acklund transformations (BTs) and their superposition principle were introduced by B\"acklund \cite{Bac} and Bianchi \cite{Bia} as transformations of pseudospherical surfaces, which are solutions of the (partial differential) sine-Gordon equation. These transformations constitute the remarkable connections that exist between the classical differential geometry of surfaces and modern soliton theory \cite{RS}. In \cite{L,LB}, the relation between nonlinear differential-difference equations associated with discrete Schr\"odinger and Zakharov-Shabat spectral problems and BTs for nonlinear evolution equations was revealed. In \cite{GY1}, lattice equations were constructed as auto-BTs for Volterra and Toda differential-difference equations. Wahlquist and Estabrook \cite{WE} showed that by recursive application of the BT to any solution of the Korteweg-de Vries equation generates a hierarchy of solutions which satisfy a superposition principle. The connection between Bianchi permutability (for BTs for nonlinear evolution equations)
and lattice equations was first established in \cite{NQC}. In this paper, the setting is fully discrete, i.e. a system of lattice equations acts as an auto-BT for a lattice equation, which is closely connected to CAC.

The system $A=B=0$ (together with the upshifted equations $\h A=\t B=0$)
is an auto-B\"acklund transformation (auto-BT) for the equation $Q=0$. Hietarinta \cite[Definitions 5.1,5.2]{JH19} distinguishes two types: eqBT and solBT. An auto-BT is a BT of solutions (solBT) if from a given solution $u$ of $Q=0$ one can solve the auto-BT $A=B=0$ to find another solution $v$ of $Q=0$. The notion of B\"acklund transform of equations (eqBT) is what we adopt here; if the auto-BT $A=B=0$ (through consistency) gives rise to $Q=0$ uniquely, the auto-BT is called a {\bf strong} eqBT. If it gives rise to $Q=0$ but not uniquely, the auto-BT is called a {\bf weak} eqBT, cf. \cite{HV}. And, if it does not give rise to $Q=0$, it is called {\bf trivial} or fake, cf. \cite{MH, BH}. Usually, an auto-BT depends on a parameter, called the B\"acklund parameter, and solutions related by auto-BTs satisfy a superposition principle (Bianchi permutability) \cite{Atk08, RB13}.

In the more special situation where the three equations have the same form, i.e. $Q=Q(p,q)$, $A=Q(p,r)$, $B=Q(q,r)$, the equation $Q=0$ is called CAC. Such equations provide their own, natural auto-BT. Multi-affine quadrilateral equations have been classified with respect to CAC in \cite{ABS03}. Under the additional assumptions, that the equations are D4-symmetric
and possess the tetrahedron property, i.e. a relation exists between $\t u$, $\h u$, $v$ and $\th v$, Adler, Bobenko and Suris (ABS) obtained a list of 9 equations, three equations of type H, two equations of type A, and 4 equations of type Q. For completeness, the list is included in section 3. Multi-quadratic equations, denoted ABS$^\ast$, with the CAC property were given in \cite{AtkN-IMRN-2014}. The ABS$^\ast$ equations define multi-valued evolution, however, due to the discriminant factorization property, they allow reformulation as a single-valued system, and possess BTs to the ABS equations. Lists of BTs between ABS equations, as well as non-natural auto-BTs (each with a superposition principle) were given in \cite{Atk08}. Auto-BTs play an important role in the construction of solutions, cf. \cite{HZ10,HZ09,HZ11,WVZ}.

In 2009, Adler, Bobenko and Suris classified multi-affine cube systems (with a priori different equations on each face) without the assumptions of D4-symmetry and the tetrahedron property, but with an additional non-degeneracy condition. Their main result \cite[Theorem 4]{ABS09} states that each 3D-consistent system of type Q is, up to M\"{o}bius transformations, one of the known Q-type systems (with D4-symmetry) found in \cite{ABS03}. This includes the equations of type A, which are related to equations $\QE$ and $\QD_0$ by (non-autonomous) point transformations. A classification of the degenerate cases (the H-equations) has been performed in \cite{Boll,RB11}. To each of the 6 edges (including the diagonals) of a quad equation one associates a biquadratic, and equations with $i$ degenerate biquadratics, $i>0$, are said to be of type H$^i$.
There are 3 quad equations of type H$^4$, 4 quad equations of type H$^6$, and a list of 3D consistent systems (with the tetrahedron property) was given in \cite[Theorem 3.4]{RB11}.

Inspired by the work \cite{ZVZ}, where multi-component generalisations of CAC lattice systems were obtained by extending the scalar variable to a diagonal matrix and applying cyclic transformations to shifted matrices, we consider the equations of the auto-BT as dependent on two 2-component vectors, and apply cyclic transformations to those. This provides a systematic method (which can be called torquing) which takes an auto-BT (invariant under interchanging $u\leftrightarrow v$) and produces another auto-BT for a different equation. In the symmetric case, the new equation is related to the tetrahedron property of the original system. In this case, the result is similar to \cite[Theorem 3.3]{RB11} in the more general setting where equations on opposite faces of the cube do not have to coincide. In the asymmetric case, the new equation relates to a (previously hidden) relation, which can be obtained from the auto-BT, satisfying a property we have called planarity.

The paper is organized as follows. In section 2, we observe that for an ABS-equation $Q(u)=0$ the natural auto-BT is not only an auto-BT for $Q(u)$, but also for an equation $R(u,v)=0$, which depends on both $u,v$. We define the notion of planarity for such equations. We prove a useful result, which implies that one can torque an auto-BT in two different ways (symmetric and non-symmetric). Full details are provided for $\HE$, where an additive\footnote{Because B\"acklund parameters in ABS equations are usually connected to addition on different curves: lines (the usual additive case), exponential functions (the usual multiplicative case), and elliptic curves (the elliptic case) \cite{ABS03,RB11}, we point out that here there is no relation to addition on curves. The terms additive resp. multiplicative refer to the additive resp. multiplicative nature of our transformations of lattice parameters, (\ref{pp}) and (\ref{ps}).} transformation of the lattice parameters is required. A pictorial representation of the consistent cubes on which symmetric and asymmetric torqued auto-BTs, as well as their superposition principles, is also provided. In section 3, we provide the results of torquing the natural auto-BTs for all ABS equations. For 6 of them ($\HE$, $\HT$, $\AE_\delta$, $\QE_\delta$, $\QT$, $\QV$)
the same additive parameter transformation applies. For the remaining 3 ABS equations ($\HD_\delta$, $\AT$, $\QD_\delta$) a multiplicative transformation of the lattice parameters is required. Details are provided for $\HD_\delta$. We obtain, amongst other cube systems, all auto-BTs listed in \cite[Table 2]{Atk08}, together with their superposition principles.
In section 5, we apply torquing to the non-natural auto-BTs for $\HD_\delta$, $\AT$ and $\QD_\delta$ found in \cite{ZZ}. The corresponding consistent cube systems, and their torqued versions seem to missing in the classification of \cite[Theorem 3.3]{RB11}. In section 6, we show that the multi-quadratic model $\HD^*_0$ \cite[Equation (22)]{AtkN-IMRN-2014} can also be torqued, giving rise to a multi-quartic equation. In the final section, we summarise our findings and mention some related results.

\section{Torqued equations}\label{sec-2}
In this paper, we will write the dependence on the field values, and the parameters, as
\begin{equation}\label{Q}
Q(u)=Q([u,\h u],[\t u,\th u];\ p,q)=0,
\end{equation}
instead of the more standard $Q(u,\t u,\h u,\th u;\ p,q)=0$. We assume equations on opposite faces of the cube are identical,
and that the auto-BT
\begin{equation}\label{AB}
A=Q([u,v],[\t u,\t v];\ p,r)=0,\qquad B=Q([u,v],[\h u,\h v];\ q,r)=0,
\end{equation}
which we denote shortly by $A=B=0$, is symmetric under interchanging $u\leftrightarrow v$.

In this section, we lay out and illustrate our method using the equation H1:
\[
\HE([u,\h u],[\t u, \th u];\ p, q):=(u-\th u)(\t u-\h u)-p+q=0.
\]
Given a non-trivial auto-BT, one is able to determine the equation for which it is an auto-BT. As we shall see, in section \ref{fabte}, it may actually be an auto-BT for more than one equation.

\subsection{From an auto-BT to the equations it is an auto-BT for} \label{fabte}
We start from a consistent cube equipped with $A=0$ on the front face, $\h A=0$ on the back face,
and $B=0$ and $\t B=0$ on the side faces, but the two equations $Q(u)=0$, $Q(v)=0$ on the bottom and top faces are omitted. Taking $u,\t u,\h u$ and $v$ as initial values, as in Figure \ref{1a},
the values of $\t v,\h v$ are uniquely determined by the equations $A=B=0$ (\ref{AB}). We then solve the system of equations $\h A=\t B=0$ for $\th u, \th v$. For multi-affine equations,  if the auto-BT is not trivial, this gives rise to two solutions, one of which will satisfy the decoupled set of equations
\begin{equation} \label{QQ}
Q(u)=Q(v)=0.
\end{equation}
In the above procedure, the equation $Q(u)=0$ is obtained directly. However, to obtain $Q(v)=0$ one needs to substitute the solution of $A=B=0$ with respect to $\t u,\h u$. By doing so the dependence on $u$ disappears.

If the auto-BT is not weak, the second solution corresponds to a coupled set of equations,
which we denote, assuming that the auto-BT is $u\leftrightarrow v$ symmetric, by
\begin{equation} \label{RR}
R(u,v)=R(v,u)=0.
\end{equation}
Here, the equation $R(u,v)=0$ is the equation for $\th v$, where we have substituted the solution of $A=B=0$ with respect to $\t u,\h v$, and $\overline{R}(u,v)=R(v,u)=0$ is the equation for $\th u$, having used the solution of $A=B=0$ with respect to $\h u,\t v$. If $R(u,v)=0$ depends on $u,\h u,\t v, \th v$ but not on $v$, and $R(v,u)=0$ depends on $v,\h v,\t u, \th u$ but not on $u$, then the equations \eqref{RR} are called {\bf planar}, cf. Figure \ref{planar}. According to \cite[Definitions 5.1]{JH19} the equation $R(u,v)=0$ is not an `acceptable' equation. The procedure of torquing, which we will soon embark upon, turns it into an acceptable equation.

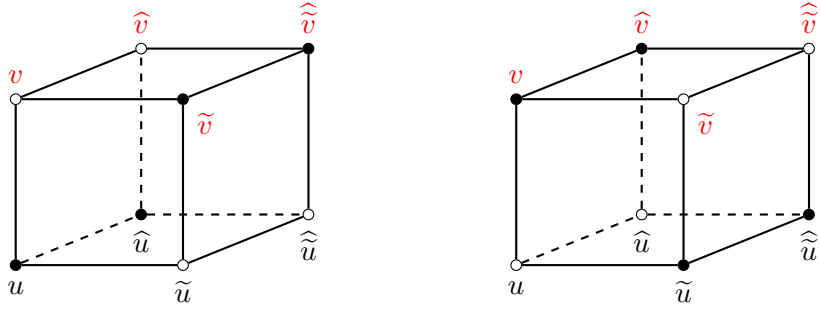
\begin{figure}[H]
	\centering
	\begin{tikzpicture}[scale=.55]
	\tikzstyle{nod1}= [circle, inner sep=0pt, fill=white, minimum size=4pt, draw]
	\tikzstyle{nod}= [circle, inner sep=0pt, fill=black, minimum size=4pt, draw]
	\def\lx{3}%
	\def\ly{1.22}%
	\def\l{4}%
	\def\d{4}%
\node[nod] (u00) at (0,0) [label=below: $u$] {};
	\node[nod1] (u10) at (\l,0) [label=below: $\widetilde{u}$] {};
	\node[nod] (u01) at  (\lx,\ly) [label=below: $\widehat{u}$] {};
	\node[nod1] (u11) at (\l+\lx,\ly) [label=below: $\widehat{\widetilde{u}}$] {};
	\node[nod1] (v00) at (0,\d) [label=above: {\color{red}$v$}] {};
	\node[nod] (v10) at (\l,\d) [label=below right: {\color{red}$\widetilde{v}$}] {};
	\node[nod1] (v01) at (\lx,\d+\ly) [label=above: {\color{red}$\widehat{v}$}] {};
	\node[nod] (v11) at (\l+\lx,\d+\ly) [label=above: {\color{red}$\widehat{\widetilde{v}}$}] {};

	\draw[thick]  (u00) -- (u10) -- (u11) -- (v11) -- (v10);
	\draw[thick]  (u00) -- (v00) -- (v01) -- (v11);
	\draw[thick]  (v00) -- (v10) -- (u10);
	\draw[thick,dashed]  (u00)  -- (u01) -- (u11);
	\draw[thick,dashed]  (u01) -- (v01);
	\end{tikzpicture}
\hspace{2cm}
	\begin{tikzpicture}[scale=.55]
	\tikzstyle{nod1}= [circle, inner sep=0pt, fill=white, minimum size=4pt, draw]
	\tikzstyle{nod}= [circle, inner sep=0pt, fill=black, minimum size=4pt, draw]
	\def\lx{3}%
	\def\ly{1.22}%
	\def\l{4}%
	\def\d{4}%
\node[nod1] (u00) at (0,0) [label=below: $u$] {};
	\node[nod] (u10) at (\l,0) [label=below: $\widetilde{u}$] {};
	\node[nod1] (u01) at  (\lx,\ly) [label=below: $\widehat{u}$] {};
	\node[nod] (u11) at (\l+\lx,\ly) [label=below: $\widehat{\widetilde{u}}$] {};
	\node[nod] (v00) at (0,\d) [label=above: {\color{red}$v$}] {};
	\node[nod1] (v10) at (\l,\d) [label=below right: {\color{red}$\widetilde{v}$}] {};
	\node[nod] (v01) at (\lx,\d+\ly) [label=above: {\color{red}$\widehat{v}$}] {};
	\node[nod1] (v11) at (\l+\lx,\d+\ly) [label=above: {\color{red}$\widehat{\widetilde{v}}$}] {};
	\draw[thick]  (u00) -- (u10) -- (u11) -- (v11) -- (v10);
	\draw[thick]  (u00) -- (v00) -- (v01) -- (v11);
	\draw[thick]  (v00) -- (v10) -- (u10);
	\draw[thick,dashed]  (u00)  -- (u01) -- (u11);
	\draw[thick,dashed]  (u01) -- (v01);
	\end{tikzpicture}
\caption{\label{planar} The stencils for a set of planar equations $R(u,v)=0$ (left)
and $R(v,u)=0$ (right) are indicated by the black dots. Each equation depends on $u$-variables (black) and $v$-variables (red).}.
\end{figure}

Let us now revisit the first solution (\ref{QQ}). By substituting the solution of $A=B=0$ with respect to $\t u,\h u$
into equation $Q(u)=0$ in \eqref{QQ} we get an equation $T(u,v)=0$ which depends on $u, \t v,\h v, \th u$ and possibly $v$. Similarly, an equation
$\overline{T}(u,v)=T(v,u)=0$ is obtained by substituting the solution of $A=B=0$ with respect to $\t v,\h v$ into equation $Q(v)=0$.
Equations obtained in this way,
\begin{equation}\label{TT}
T(u,v)=0,\ T(v,u)=0
\end{equation}
are said to have the {\bf tetrahedron} property if $T(u,v)$ does not depend on $v$, and $T(v,u)$ does not depend on $u$, cf. Figure \ref{tetrah}.

\begin{figure}[H]
	\centering
	\begin{tikzpicture}[scale=.55]
	\tikzstyle{nod1}= [circle, inner sep=0pt, fill=white, minimum size=4pt, draw]
	\tikzstyle{nod}= [circle, inner sep=0pt, fill=black, minimum size=4pt, draw]
	\def\lx{3}%
	\def\ly{1.22}%
	\def\l{4}%
	\def\d{4}%
\node[nod] (u00) at (0,0) [label=below: $u$] {};
	\node[nod1] (u10) at (\l,0) [label=below: $\widetilde{u}$] {};
	\node[nod1] (u01) at  (\lx,\ly) [label=below: $\widehat{u}$] {};
	\node[nod] (u11) at (\l+\lx,\ly) [label=below: $\widehat{\widetilde{u}}$] {};
	\node[nod1] (v00) at (0,\d) [label=above: {\color{red}$v$}] {};
	\node[nod] (v10) at (\l,\d) [label=below right: {\color{red}$\widetilde{v}$}] {};
	\node[nod] (v01) at (\lx,\d+\ly) [label=above: {\color{red}$\widehat{v}$}] {};
	\node[nod1] (v11) at (\l+\lx,\d+\ly) [label=above: {\color{red}$\widehat{\widetilde{v}}$}] {};

	\draw[thick]  (u00) -- (u10) -- (u11) -- (v11) -- (v10);
	\draw[thick]  (u00) -- (v00) -- (v01) -- (v11);
	\draw[thick]  (v00) -- (v10) -- (u10);
	\draw[thick,dashed]  (u00)  -- (u01) -- (u11);
	\draw[thick,dashed]  (u01) -- (v01);
	\end{tikzpicture}
\hspace{2cm}
	\begin{tikzpicture}[scale=.55]
	\tikzstyle{nod1}= [circle, inner sep=0pt, fill=white, minimum size=4pt, draw]
	\tikzstyle{nod}= [circle, inner sep=0pt, fill=black, minimum size=4pt, draw]
	\def\lx{3}%
	\def\ly{1.22}%
	\def\l{4}%
	\def\d{4}%
\node[nod1] (u00) at (0,0) [label=below: $u$] {};
	\node[nod] (u10) at (\l,0) [label=below: $\widetilde{u}$] {};
	\node[nod] (u01) at  (\lx,\ly) [label=below: $\widehat{u}$] {};
	\node[nod1] (u11) at (\l+\lx,\ly) [label=below: $\widehat{\widetilde{u}}$] {};
	\node[nod] (v00) at (0,\d) [label=above: {\color{red}$v$}] {};
	\node[nod1] (v10) at (\l,\d) [label=below right: {\color{red}$\widetilde{v}$}] {};
	\node[nod1] (v01) at (\lx,\d+\ly) [label=above: {\color{red}$\widehat{v}$}] {};
	\node[nod] (v11) at (\l+\lx,\d+\ly) [label=above: {\color{red}$\widehat{\widetilde{v}}$}] {};
	\draw[thick]  (u00) -- (u10) -- (u11) -- (v11) -- (v10);
	\draw[thick]  (u00) -- (v00) -- (v01) -- (v11);
	\draw[thick]  (v00) -- (v10) -- (u10);
	\draw[thick,dashed]  (u00)  -- (u01) -- (u11);
	\draw[thick,dashed]  (u01) -- (v01);
	\end{tikzpicture}
\caption{\label{tetrah} The stencils for a set of tetrahedron equations $T(u,v)=0$ (left) and $T(v,u)=0$ (right) are indicated by the black dots.  Each equation depends on $u$-variables (black) and $v$-variables (red).}
\end{figure}
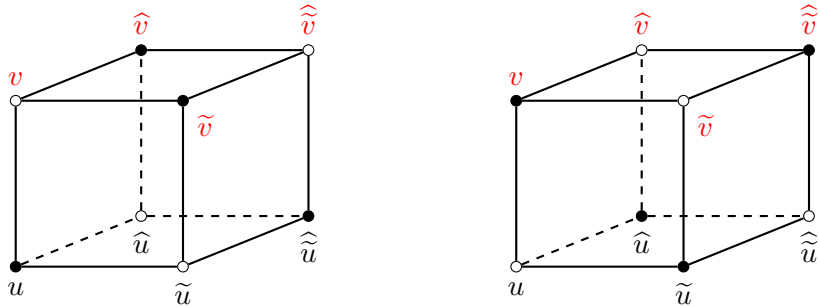

Importantly, both systems of equations $A=B=T=\h A=\t B=\b{T}=0$ and
$A=B=R=\h A=\t B=\b{R}=0$, although they are not defined on the faces of the cube, are CAC (irrespective of planarity, or the tetrahedron property). Indeed, in each case, from initial values $u,\t u,\h u, v$, there are three ways of calculating $\th v$, which yield the same value. 

\medskip
\noindent
{\bf Additive example.}
The natural auto-BT for H1 is given by
\[
A=(u-\t v)(\t u-v)-p+r=0,\qquad B=(u-\h v)(\h u- v)-q+r=0.
\]
We solve for $\t v,\h v$,
\begin{equation} \label{hvtv}
\h v = u+\frac{q - r}{v-\h u},\qquad  \t v = u +\frac{p - r}{v-\t u},
\end{equation}
and substitute the result into
\[
\h A=(\h u-\th v)(\th u- \h v)-p+r=0,\qquad \t B=(\t u- \th v)(\th u- \t v)-q+r=0.
\]
The obtained system for $\th u,\th v$ has two solutions:
\begin{itemize}
\item The first solution is
\begin{equation} \label{fs}
\th u = u + \frac{p - q}{\h u-\t u},\qquad
\th v = \t u + \frac{(q-r)(v-\t u)(\h u-\t u)}{p(\h u-v)+q(v-\t u)+r(\t u-\h u)},
\end{equation}
which, together with (\ref{hvtv}), satisfies $\HE([u,\h u],[\t u, \th u]; p, q) = \HE([v,\h v],[\t v, \th v]; p, q)=0$. The equation for $\th v$ in (\ref{fs}) has the tetrahedron property, as it does not depend on $u$. The equation for $\th u$ can be rewritten as a tetrahedron equation, by substituting the solution of $A=B=0$ with respect to $\t u, \h u$, i.e.
$\t u = v + (p -r)/(u - \t v)$, $\h u = v+ (q -r)/(u - \h v)$.
This gives
\begin{equation} \label{fsut}
\th u=\t v-\frac{(q - r)(u - \t v)(\h v - \t v)}{p(u - \h v) - q(u - \t v) + r(\h v - \t v)},
\end{equation}
which equals the equation for $\th v$ after interchanging $u\leftrightarrow v$.

\item The second solution is
\begin{equation}\label{seso}
\th u = u + \frac{p - r}{v-\t u} + \frac{q - r}{v-\h u},\qquad
\th v = \h u + \t u - v,
\end{equation}
which does not decouple. Substituting $\t u = v+(p - r)/(u - \t v)$ into the second equation yields a planar equation,
\begin{equation} \label{Ruv}
R(u,v)=(u - \t v )(\h u - \th v) + p - r = 0,
\end{equation}
and substituting $\h u = u + (q - r)/(u - \h v)$ into the first equation gives $R(v,u)=0$. Note that in (\ref{seso}) the equation for $\th v$ also has the tetrahedron property.
\end{itemize}

\subsection{Torqued auto-BTs}
We introduce an involution $\sigma$ which switches $u\leftrightarrow v$, that is $\sigma[u,v]=[v,u]$, and which commutes with shifts. We define an action of $\sigma$ on functions of 2-vectors, such as $A([u,v],[\t u,\t v])$ by $\sigma A([u,v],[\t u,\t v])=A(\sigma[u,v],\sigma[\t u,\t v])$. For functions of $u,v$ and their shifts, such as \eqref{QQ} or \eqref{RR}, we write $Q(u,v)=Q([u,v],[\t u,\t v],[\h u, \h v], [\th u,\th v])$ and then we simply have
\begin{align*}
\sigma Q(u,v)&=\sigma Q([u,v],[\t u,\t v],[\h u, \h v], [\th u,\th v])\\
&=Q(\sigma[u,v],\sigma[\t u,\t v],\sigma[\h u, \h v], \sigma[\th u,\th v])\\
&=Q([v,u],[\t v,\t u],[\h v, \h u],[\th v,\th u])\\
&=Q(v,u).
\end{align*}
We now formulate a useful lemma, which enables us to derive auto-BTs from auto-BTs.
\begin{lemma} \label{THM}
Suppose the system
\begin{align}
A([u,v],[\t u,\t v])=0,\qquad B([u,v],[\h u,\h v])=0,\label{BT1}
\end{align}
is invariant under $\sigma$ and that it is CAC with $\h A=\t B=0$ and $Q=\overline{Q}=0$, where
\begin{equation}\label{GQ}
Q=Q(u,v)=Q([u,v],[\t u,\t v],[\h u, \h v], [\th u,\th v])=0,
\end{equation}
and $\overline{Q}(u,v)=Q(v,u)$.
Then, with $a,b\in\{0,1\}$, the system
\begin{equation}
A_a=A([u,v],\sigma^a[\t u,\t v])=0, \qquad B_b=B([u,v],\sigma^b[\h u,\h v])=0, \label{gbt}
\end{equation}
is CAC with $\h A_a=\t B_b=0$ and $Q_{a,b}=\overline Q_{a,b}=0$, where
\begin{equation}
Q_{a,b}=Q([u,v],\sigma^a[\t u,\t v],\sigma^b[\h u, \h v],\sigma^{a+b}[\th u,\th v])=0,
\end{equation}
and $\overline{Q}_{a,b}(u,v)=Q_{a,b}(v,u)$.
\end{lemma}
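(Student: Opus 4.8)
The plan is to produce an invertible change of variables, an involution $\Phi$ assembled out of $\sigma$, that matches the torqued system \eqref{gbt} with the original system \eqref{BT1} equation by equation; the CAC property is then transported across $\Phi$ for free. Write $w=[u,v]$, so that $\t w=[\t u,\t v]$, $\h w=[\h u,\h v]$ and $\th w=[\th u,\th v]$, and recall that $\sigma$ is an involution commuting with shifts, so that $\sigma^2=\mathrm{id}$. To a configuration $c$ of the torqued fields I associate the configuration $\Phi(c)$ of the original fields obtained by leaving $w$ fixed and replacing $\t w,\h w,\th w$ by $\sigma^a\t w,\sigma^b\h w,\sigma^{a+b}\th w$ respectively. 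Since $\sigma^2=\mathrm{id}$, the map $\Phi$ is its own inverse, hence a bijection between torqued and original configurations.

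First I would check the two unshifted equations. Directly from \eqref{gbt}, $A_a(c)=A(w,\sigma^a\t w)$ and $B_b(c)=B(w,\sigma^b\h w)$ are precisely $A$ and $B$ evaluated at $\Phi(c)$, with no use of $\sigma$-invariance. The two shifted equations are where the hypothesis that \eqref{BT1} is $\sigma$-invariant enters. The hat-shift of $A_a$ reads $\h A_a(c)=A(\h w,\sigma^a\th w)$, whereas $A$ evaluated at the hatted slots of $\Phi(c)$ is $A(\sigma^b\h w,\sigma^{a+b}\th w)$. Writing $\sigma^{a+b}=\sigma^b\sigma^a$ exhibits a common factor $\sigma^b$ in both arguments, which the identity $\sigma A=A$ absorbs; hence $\h A_a(c)=\h A(\Phi(c))$, and symmetrically $\t B_b(c)=\t B(\Phi(c))$ from $\sigma B=B$.

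It remains to treat the two face equations $Q_{a,b}$ and $\overline Q_{a,b}$ (cf.\ \eqref{GQ}). By construction $Q_{a,b}(c)=Q(w,\sigma^a\t w,\sigma^b\h w,\sigma^{a+b}\th w)=Q(\Phi(c))$, with no invariance required, and indeed $Q$ need not be $\sigma$-invariant. For the opposite face one first records that torquing commutes with $\sigma$: a one-line check of exponents gives $\overline{Q}_{a,b}=(\overline Q)_{a,b}$ as functions, whence $\overline Q_{a,b}(c)=\overline Q(\Phi(c))$ by the same substitution. Consequently each of the six torqued equations equals the corresponding original equation precomposed with $\Phi$, so the zero set of \eqref{gbt} together with $\h A_a=\t B_b=0$ and $Q_{a,b}=\overline Q_{a,b}=0$ is exactly the $\Phi$-image of the zero set of \eqref{BT1} together with $\h A=\t B=0$ and $Q=\overline Q=0$.

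The conclusion I would draw is that CAC passes through $\Phi$. The main point to handle carefully is that $\Phi$, being a coordinate bijection that merely swaps the roles of certain $u$- and $v$-slots, carries the initial-value data $u,\t u,\h u,v$ of the original cube to an equally admissible set of initial data and respects the three routes by which the apex value is computed; since those three routes yield a common value for the original system, their $\Phi$-images yield a common value for the torqued system. I expect this transfer-of-consistency step, rather than the equation-matching, which is essentially bookkeeping with $\sigma^2=\mathrm{id}$ together with the $\sigma$-invariance of $A$ and $B$, to be the only place demanding real care.
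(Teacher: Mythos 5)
Your proposal is correct and follows essentially the same route as the paper's proof: the identity $\h A_a = A(\sigma^b[\h u,\h v],\sigma^{a+b}[\th u,\th v])$ (and its analogue for $\t B_b$) obtained by absorbing a common power of $\sigma$ via the $\sigma$-invariance of $A$ and $B$ is exactly the step the paper records, and your explicit involution $\Phi$ is precisely the ``relabeling of variables'' by which the paper then transfers consistency. Your version merely makes that relabeling, and the check that $\overline{Q}_{a,b}=(\overline{Q})_{a,b}$, explicit.
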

\begin{proof}
The equations $\h A_a=\t B_b=0$ are equivalent to
\[
A(\sigma^b[\h u,\h v],\sigma^{a+b}[\th u,\th v])=0, \qquad B(\sigma^a[\t u,\t v],\sigma^{a+b}[\th u,\th v])=0
\]
because $A$ and $B$ are invariant under $\sigma$.
The consistency of the corresponding cube system then follows by relabeling of variables, cf. the proof of Lemma 1 in \cite{ZVZ}.
\end{proof}

The usefulness of this lemma lies in the following corollary. If the equation $Q_{a,b}=0$ depends on $u$-variables only, then the system $A_a=B_b=0$ provides an auto-BT for it. We distinguish two cases.
\begin{itemize}
\item If the system $Q=\overline{Q}=0$ is planar, we obtain the {\bf asymmetric torqued auto-BT}, (\ref{gbt}) with $a=1$, $b=0$,
\begin{equation}\label{abt}
A([u,v],[\t v,\t u])=0, \qquad B([u,v],[\h u,\h v])=0
\end{equation}
which is an auto-BT for
\begin{equation} \label{ASYC}
Q([u,v],[\t v,\t u],[\h u, \h v],[\th v,\th u])=0,
\end{equation}
which depends on $u$-variables only.
\item If the system $Q=\overline{Q}=0$ has the tetrahedron property, we have the {\bf symmetric torqued auto-BT}, (\ref{gbt}) with $a=b=1$,
\begin{equation}\label{sbt}
A([u,v],[\t v,\t u])=0, \qquad B([u,v],[\h v,\h u])=0,
\end{equation}
which is an auto-BT for
\begin{equation} \label{SYC}
Q([u,v],[\t v,\t u],[\h v, \h u],[\th u,\th v])=0,
\end{equation}
which depends on $u$-variables only.
\end{itemize}

\subsubsection{The asymmetric torqued auto-BT}
If the second solution to a BT, equation \eqref{RR}, is planar, then by relabeling $\t u\leftrightarrow \t v$ and $\th u\leftrightarrow \th v$, the equations decouple. After relabeling, the planar equation $R(u,v)=0$ will depend on $u$ only, and the planar equation $R(v,u)=0$ will depend on $v$ only, cf. Figure \ref{asymcube}.

\begin{figure}[H]
	\centering
	\begin{tikzpicture}[scale=.55]
	\tikzstyle{nod1}= [circle, inner sep=0pt, fill=white, minimum size=4pt, draw]
	\tikzstyle{nod}= [circle, inner sep=0pt, fill=black, minimum size=4pt, draw]
	\def\lx{3}%
	\def\ly{1.22}%
	\def\l{4}%
	\def\d{4}%
\node[nod] (u00) at (0,0) [label=below: $u$] {};
	\node[nod1] (u10) at (\l,0) [label=below: $\widetilde{u}$] {};
	\node[nod] (u01) at  (\lx,\ly) [label=below: $\widehat{u}$] {};
	\node[nod1] (u11) at (\l+\lx,\ly) [label=below: $\widehat{\widetilde{u}}$] {};
	\node[nod1] (v00) at (0,\d) [label=above: {\color{red}$v$}] {};
	\node[nod] (v10) at (\l,\d) [label=below right: {\color{red}$\widetilde{v}$}] {};
	\node[nod1] (v01) at (\lx,\d+\ly) [label=above: {\color{red}$\widehat{v}$}] {};
	\node[nod] (v11) at (\l+\lx,\d+\ly) [label=above: {\color{red}$\widehat{\widetilde{v}}$}] {};

	\draw[thick]  (u00) -- (u10) -- (u11) -- (v11) -- (v10);
	\draw[thick]  (u00) -- (v00) -- (v01) -- (v11);
	\draw[thick]  (v00) -- (v10) -- (u10);
	\draw[thick,dashed]  (u00)  -- (u01) -- (u11);
	\draw[thick,dashed]  (u01) -- (v01);
	\end{tikzpicture}
\hspace{2cm}
	\begin{tikzpicture}[scale=.55]
	\tikzstyle{nod1}= [circle, inner sep=0pt, fill=white, minimum size=4pt, draw]
	\tikzstyle{nod}= [circle, inner sep=0pt, fill=black, minimum size=4pt, draw]
	\def\lx{3}%
	\def\ly{1.22}%
	\def\l{4}%
	\def\d{4}%
\node[nod] (u00) at (0,0) [label=below: $u$] {};
	\node[nod1] (u10) at (\l,0) [label=below: {\color{red}$\widetilde{v}$}] {};
	\node[nod] (u01) at  (\lx,\ly) [label=below: $\widehat{u}$] {};
	\node[nod1] (u11) at (\l+\lx,\ly) [label=below: {\color{red}$\widehat{\widetilde{v}}$}] {};
	\node[nod1] (v00) at (0,\d) [label=above: {\color{red}$v$}] {};
	\node[nod] (v10) at (\l,\d) [label=below right: $\widetilde{u}$] {};
	\node[nod1] (v01) at (\lx,\d+\ly) [label=above: {\color{red}$\widehat{v}$}] {};
	\node[nod] (v11) at (\l+\lx,\d+\ly) [label=above: $\widehat{\widetilde{u}}$] {};
	\draw[thick]  (u00) -- (u10) -- (u11) -- (v11) -- (v10);
	\draw[thick]  (u00) -- (v00) -- (v01) -- (v11);
	\draw[thick]  (v00) -- (v10) -- (u10);
	\draw[thick,dashed]  (u00)  -- (u01) -- (u11);
	\draw[thick,dashed]  (u01) -- (v01);
	\end{tikzpicture}
\caption{ \label{asymcube} By relabeling of variables the diagonal planar stencil for $R(u,v)$ (left) becomes a quadrilateral for a decoupled equation $R(u)=0$ which depends on $u$-variables only (right). Similarly the expression $R(v,u)$ involves $v$-variables only (see the right cube in Figure \ref{planar}).}
\end{figure}
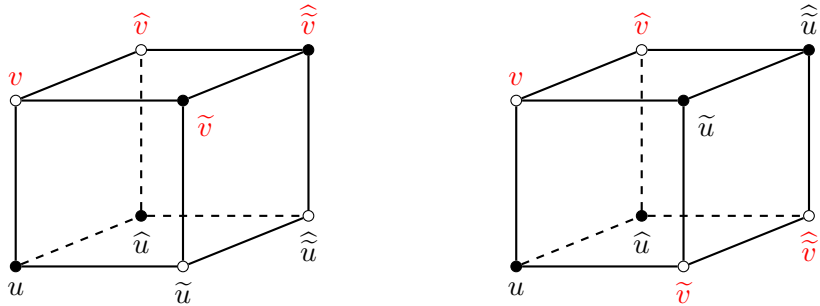

\begin{remark} The cubes on the right in Figures \ref{asymcube} and \ref{symcube} are useful to define equations. However, one should be aware that the fields $u,v$ are defined in the usual way, e.g. $\t{u}(n,m) = u(n + 1,m)$, cf. \cite[Remark 2.4]{ZVZ}.
\end{remark}

\noindent
{\bf Additive example, asymmetric case.} For the H1 equation, the asymmetric auto-BT (\ref{abt}) reads
\begin{equation}
A=(u-\t u)(\t v-v)-p+r=0,\qquad
B=(u-\h v)(\h u- v)-q+r=0. \label{abth1a}
\end{equation}
Switching $\t u\leftrightarrow \t v$ and $\th u\leftrightarrow \th v$ in \eqref{RR} with \eqref{Ruv}, gives
\[
(u - \t u )(\h u - \th u) + p - r = 0,
\qquad
(v - \t v )(\h v - \th v) + p - r = 0,
\]
which takes the form
\begin{equation}\label{abth1c}
\HE([u,\h u],\sigma[\t u,\th u];p,r)=0, \qquad \HE([v,\h v],\sigma[\t v,\th v];p,r)=0.
\end{equation}
These are torqued versions of H1, similar to the $A$-part of the auto-BT,
except that the dependence on the parameters is different. The $B$-part of the auto-BT is the standard H1 equation.

In order for (\ref{abth1a}) to be a proper auto-BT,
the equations (\ref{abth1c}) should not depend on the B\"acklund parameter $r$.
We achieve this by introducing a new parameter
\begin{equation} \label{pp}
p^\prime=p-r.
\end{equation}
We can write the $A$-part of the auto-BT as
\[
\HE([u,v],[\t v,\t u];p^\prime+r,r)=0,
\]
in terms of the new parameter. We note that for $\HE$ this may seem a bit odd as the equation does not depend on $r$.

\medskip
\noindent
{\bf Notation.}
We will adopt the notation $Q^a=0$ with
\begin{equation}\label{Qa}
Q^a(p,q)[u]=Q^a([u,\h u],[\t u,\th u]; p,q)=Q([u,\h u],\sigma[\t u,\th u]; p+q, q)
\end{equation}
for the torqued version of a lattice equation $Q=0$, with an $a$dditive transformation of the lattice parameters.

\medskip
\noindent
{\bf Additive example, asymmetric case, recap.}
Omitting the dependence on the fields, the equations (\ref{abth1c}) can both be written as
\begin{equation} \label{aH1}
\HE^a(p^\prime,q)=0,
\end{equation}
and the auto-BT, (\ref{abth1a}), is written as\footnote{Here $Q^a(p,r)$ and $Q(p,r)$ mean
\[Q^a(p,r)[u]=Q^a([u,\b u],[\t u,\t{\b u}]; p,r),~~ Q(p,r)[u]=Q([u,\b u],[\t u,\t{\b u}]; p,r),\]
respectively,  where $\b u=v$. If replacing $(\t u,p)$ by $(\h u, q)$, one gets $Q^a(q,r)$ and $Q(q,r)$.
}
\begin{equation} \label{abtH1}
A=\HE^a(p^\prime,r)=0,\qquad  B=\HE(q,r)=0.
\end{equation}
One may wonder why $q$ appears in \eqref{aH1}.
In fact, the equation does not depend on $q$, cf. $\HE^a$ in the list provided in section 6.
Note that the other additive torqued ABS (t-ABS) equations do have $q$-dependence.

\subsubsection{The symmetric torqued auto-BT}
By relabeling $\t u \leftrightarrow \t v$ and $\h u\leftrightarrow \h v$ a coupled system (\ref{TT}), which has the tetrahedron property, decouples, see Figure \ref{symcube}.

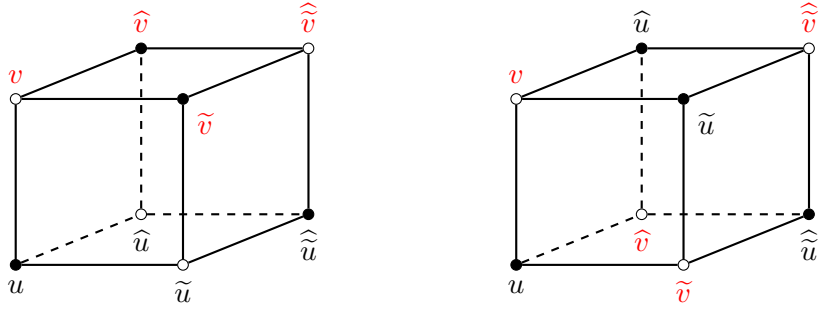
\begin{figure}[H]
	\centering
	\begin{tikzpicture}[scale=.55]
	\tikzstyle{nod1}= [circle, inner sep=0pt, fill=white, minimum size=4pt, draw]
	\tikzstyle{nod}= [circle, inner sep=0pt, fill=black, minimum size=4pt, draw]
	\def\lx{3}%
	\def\ly{1.22}%
	\def\l{4}%
	\def\d{4}%
\node[nod] (u00) at (0,0) [label=below: $u$] {};
	\node[nod1] (u10) at (\l,0) [label=below: $\widetilde{u}$] {};
	\node[nod1] (u01) at  (\lx,\ly) [label=below: $\widehat{u}$] {};
	\node[nod] (u11) at (\l+\lx,\ly) [label=below: $\widehat{\widetilde{u}}$] {};
	\node[nod1] (v00) at (0,\d) [label=above: {\color{red}$v$}] {};
	\node[nod] (v10) at (\l,\d) [label=below right: {\color{red}$\widetilde{v}$}] {};
	\node[nod] (v01) at (\lx,\d+\ly) [label=above: {\color{red}$\widehat{v}$}] {};
	\node[nod1] (v11) at (\l+\lx,\d+\ly) [label=above: {\color{red}$\widehat{\widetilde{v}}$}] {};

	\draw[thick]  (u00) -- (u10) -- (u11) -- (v11) -- (v10);
	\draw[thick]  (u00) -- (v00) -- (v01) -- (v11);
	\draw[thick]  (v00) -- (v10) -- (u10);
	\draw[thick,dashed]  (u00)  -- (u01) -- (u11);
	\draw[thick,dashed]  (u01) -- (v01);
	\end{tikzpicture}
\hspace{2cm}
	\begin{tikzpicture}[scale=.55]
	\tikzstyle{nod1}= [circle, inner sep=0pt, fill=white, minimum size=4pt, draw]
	\tikzstyle{nod}= [circle, inner sep=0pt, fill=black, minimum size=4pt, draw]
	\def\lx{3}%
	\def\ly{1.22}%
	\def\l{4}%
	\def\d{4}%
\node[nod] (u00) at (0,0) [label=below: $u$] {};
	\node[nod1] (u10) at (\l,0) [label=below: {\color{red}$\widetilde{v}$}] {};
	\node[nod1] (u01) at  (\lx,\ly) [label=below: {\color{red}$\widehat{v}$}] {};
	\node[nod] (u11) at (\l+\lx,\ly) [label=below: $\widehat{\widetilde{u}}$] {};
	\node[nod1] (v00) at (0,\d) [label=above: {\color{red}$v$}] {};
	\node[nod] (v10) at (\l,\d) [label=below right: $\widetilde{u}$] {};
	\node[nod] (v01) at (\lx,\d+\ly) [label=above: $\widehat{u}$] {};
	\node[nod1] (v11) at (\l+\lx,\d+\ly) [label=above: {\color{red}$\widehat{\widetilde{v}}$}] {};
	\draw[thick]  (u00) -- (u10) -- (u11) -- (v11) -- (v10);
	\draw[thick]  (u00) -- (v00) -- (v01) -- (v11);
	\draw[thick]  (v00) -- (v10) -- (u10);
	\draw[thick,dashed]  (u00)  -- (u01) -- (u11);
	\draw[thick,dashed]  (u01) -- (v01);
	\end{tikzpicture}
\caption{ \label{symcube} An equation $T(u,v)=0$ with the tetrahedron property (left). The black dots indicate the stencil on which the equation is defined. After relabeling variables the equation depends on $u$-variables only (right). Similarly the equation for $T(v,u)$ (see the right cube in Figure \ref{tetrah}) becomes an equation involving $v$-variables only.}
\end{figure}

\noindent
{\bf Additive example, symmetric case I.}
We consider the symmetric auto-BT (\ref{SYC}), which consists of
\begin{equation}
A=(u-\t u)(\t v-v)-p+r=0,\qquad B=(u-\h u)(\h v- v)-q+r=0. \label{sabh1a}
\end{equation}
We now switch $\t u \leftrightarrow \t v$ and $\h u\leftrightarrow \h v$ in the first solution, (\ref{fs}). This gives
\[
\th u = u + \frac{p - q}{\h v-\t v},\qquad
\th v = \t v + \frac{(q-r)(v-\t v)(\h v-\t v)}{p(\h v-v)+q(v-\t v)+r(\t v-\h v)}.
\]
The latter equation is decoupled, and the first one takes the same form
after elimination of $\t v,\h v$ making use of (\ref{sabh1a}). The equation can be identified as $\QE_0(p-r,q-r)=0$,
which is one of the ABS equations provided in the next section, again with different dependence on the parameters.
Using (\ref{pp}) and defining also
\begin{equation} \label{qp}
q^\prime=q-r,
\end{equation}
the equation $\QE_0(p^\prime,q^\prime)=0$ admits the auto-BT
\begin{equation} \label{sabth1}
\HE^a(p^\prime,r)=\HE^a(q^\prime,r)=0.
\end{equation}

\medskip
\noindent
{\bf Additive example, symmetric case II.}
We can also switch $\t u \leftrightarrow \t v$ and $\h u\leftrightarrow \h v$ in the second solution, (\ref{seso}).
This gives
\[
\th u = u + \frac{p - r}{v-\t v} + \frac{q - r}{v-\h v},\qquad
\th v = \h v + \t v - v,
\]
of which the first equation gives rise to the linear (difference) equation
\begin{equation} \label{eqD}
D: \th u +u - \t u - \h u = 0,
\end{equation} due to (\ref{sabh1a}). Thus, the auto-BT (\ref{sabth1}) is an auto-BT
for both $\QE_0(p^\prime,q^\prime)=0$ and $D=0$. It is quite special; for only two ABS equations ($\HE$ and $\HD_0$) we find decoupling in the second solution, applying the symmetric switch. It means that in these cases the symmetric auto-BT is weak. In \cite[Section 5.2-5.3]{HV} these extra solutions were called exotic.

\bigskip
\noindent
Note that in the asymmetric case both the equation (\ref{aH1}) and the auto-BT (\ref{abtH1}) only depend on $p^\prime$,
not on $p$. Similarly, for the symmetric case we have dependence on $p^\prime, q^\prime$,
and not on $p,q$. Therefore, in the sequel we will omit the prime.

\subsection{Pictorial representation, and superposition principles}\label{sec-3}

We will represent $Q^a(p,q)[u]=0$, equation \eqref{Qa}, pictorially as in Figure \ref{torq}.
The edge $[\t u,\th u]$ being torqued does not mean we switch $\t u $ and $\th u$ on the lattice,
but only in the equation.

\begin{figure}[H]
\centering
\begin{tikzpicture}[scale=.45]
	\tikzstyle{nod1}= [circle, inner sep=0pt, fill=white, minimum size=4pt, draw]
	\tikzstyle{nod}= [circle, inner sep=0pt, fill=black, minimum size=4pt, draw]
    \def\l{4}%
	\def\d{4}%
    \def\r{0.8}%

\fill[opacity=.2] (\r*\l,0) -- (\l,0) -- (\l,\d) -- (\r*\l,\d) -- (\r*\l,0);

\node[nod1] (u00) at (0,0) [label=below left: $u$] {};
	\node[nod1] (u10) at (\l,0) [label=below right: $\widetilde{u}$] {};
\node[nod1] (u01) at (0,\d) [label=above left: $\widehat{u}$] {};
	\node[nod1] (u11) at (\l,\d) [label=above right: $\widehat{\widetilde{u}}$] {};

\draw[font=\Large]  (\l/2,\d/2) node {$Q^a$};

\draw[thick]  (u00) -- (u10) -- (u11) -- (u01) -- (u00);

\end{tikzpicture}
\caption{\label{torq} Representation of a torqued equation.}
\end{figure}
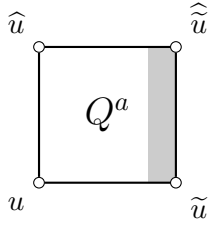

The results for the $\HE$ equation obtained in the previous subsection can be represented by the CAC systems in Figure \ref{ris}.

\begin{figure}[H]
	\centering
	\begin{tikzpicture}[scale=.55]
	\tikzstyle{nod1}= [circle, inner sep=0pt, fill=white, minimum size=4pt, draw]
	\tikzstyle{nod}= [circle, inner sep=0pt, fill=black, minimum size=4pt, draw]
	\def\lx{2.4}%
	\def\ly{1.22}%
	\def\l{4}%
	\def\d{4}%
    \def\r{0.8}%

\fill[opacity=.2] (\r*\l,0) -- (\l,0) -- (\l,\d) -- (\r*\l,\d) -- (\r*\l,0);
\fill[opacity=.2] (\r*\l+\lx ,\ly) -- (\l+\lx,\ly) -- (\l+\lx,\d+\ly) -- (\r*\l+\lx,\d+\ly) -- (\r*\l+\lx,\ly);
\fill[opacity=.2] (\r*\l,0) -- (\r*\l+\lx ,\ly) -- (\l+\lx,\ly) -- (\l,0) -- (\r*\l,0);
\fill[opacity=.2] (\r*\l,\d) -- (\r*\l+\lx ,\d+\ly) -- (\l+\lx,\d+\ly) -- (\l,\d) -- (\r*\l,\d);

\node[nod] (u00) at (0,0) [label=below: $u$] {};
	\node[nod] (u10) at (\l,0) [label=below: $\widetilde{u}$] {};
	\node[nod] (u01) at  (\lx,\ly) [label=below: $\widehat{u}$] {};
	\node[nod1] (u11) at (\l+\lx,\ly) [label=below: $\widehat{\widetilde{u}}$] {};
	\node[nod] (v00) at (0,\d) [label=above: $v$] {};
	\node[nod1] (v10) at (\l,\d) [label=below right: $\widetilde{v}$] {};
	\node[nod1] (v01) at (\lx,\d+\ly) [label=above: $\widehat{v}$] {};
	\node[nod1] (v11) at (\l+\lx,\d+\ly) [label=above: $\widehat{\widetilde{v}}$] {};

\draw[font=\Large]  (\l/2,\d/2) node {$\HE^a$};
\draw[font=\Large]  (\l+\lx/2,\d/2+\ly/2) node {$\HE$};
\draw[font=\Large]  (\l/2+\lx/2,\d+\ly/2) node {$\HE^a$};

	\draw[thick]  (u00) -- (u10) -- (u11) -- (v11) -- (v10);
	\draw[thick]  (u00) -- (v00) -- (v01) -- (v11);
	\draw[thick]  (v00) -- (v10) -- (u10);
	\draw[thick,dashed]  (u00)  -- (u01) -- (u11);
	\draw[thick,dashed]  (u01) -- (v01);

\draw (\l/2,-\d/4) node {(a)};
	\end{tikzpicture}
\hspace{1cm} \begin{tikzpicture}[scale=.55]
	\tikzstyle{nod1}= [circle, inner sep=0pt, fill=white, minimum size=4pt, draw]
	\tikzstyle{nod}= [circle, inner sep=0pt, fill=black, minimum size=4pt, draw]
	\def\lx{2.4}%
	\def\ly{1.22}%
	\def\l{4}%
	\def\d{4}%
    \def\r{0.8}%

\fill[opacity=.2] (\r*\l,0) -- (\l,0) -- (\l,\d) -- (\r*\l,\d) -- (\r*\l,0);
\fill[opacity=.2] (\r*\l+\lx ,\ly) -- (\l+\lx,\ly) -- (\l+\lx,\d+\ly) -- (\r*\l+\lx,\d+\ly) -- (\r*\l+\lx,\ly);
\fill[opacity=.2] (\r*\lx ,\r*\ly) -- (\lx,\ly) -- (\lx,\ly+\d) -- (\r*\lx,\r*\ly+\d) -- (\r*\lx,\r*\ly);
\fill[opacity=.2] (\l+\r*\lx ,\r*\ly) -- (\l+\lx,\ly) -- (\l+\lx,\ly+\d) -- (\l+\r*\lx,\r*\ly+\d) -- (\l+\r*\lx,\r*\ly);

\node[nod] (u00) at (0,0) [label=below: $u$] {};
	\node[nod] (u10) at (\l,0) [label=below: $\widetilde{u}$] {};
	\node[nod] (u01) at  (\lx,\ly) [label=below: $\widehat{u}$] {};
	\node[nod1] (u11) at (\l+\lx,\ly) [label=below: $\widehat{\widetilde{u}}$] {};
	\node[nod] (v00) at (0,\d) [label=above: $v$] {};
	\node[nod1] (v10) at (\l,\d) [label=below right: $\widetilde{v}$] {};
	\node[nod1] (v01) at (\lx,\d+\ly) [label=above: $\widehat{v}$] {};
	\node[nod1] (v11) at (\l+\lx,\d+\ly) [label=above: $\widehat{\widetilde{v}}$] {};

\draw[font=\Large]  (\l/2,\d/2) node {$\HE^a$};
\draw[font=\Large]  (\l+\lx/2,\d/2+\ly/2) node {$\HE^a$};
\draw[font=\Large]  (\l/2+\lx/2,\d+\ly/2) node {$\QE_0$};

	\draw[thick]  (u00) -- (u10) -- (u11) -- (v11) -- (v10);
	\draw[thick]  (u00) -- (v00) -- (v01) -- (v11);
	\draw[thick]  (v00) -- (v10) -- (u10);
	\draw[thick,dashed]  (u00)  -- (u01) -- (u11);
	\draw[thick,dashed]  (u01) -- (v01);
\draw (\l/2,-\d/4) node {(b)};
	\end{tikzpicture}
\hspace{1cm} \begin{tikzpicture}[scale=.55]
	\tikzstyle{nod1}= [circle, inner sep=0pt, fill=white, minimum size=4pt, draw]
	\tikzstyle{nod}= [circle, inner sep=0pt, fill=black, minimum size=4pt, draw]
	\def\lx{2.4}%
	\def\ly{1.22}%
	\def\l{4}%
	\def\d{4}%
    \def\r{0.8}%

\fill[opacity=.2] (\r*\l,0) -- (\l,0) -- (\l,\d) -- (\r*\l,\d) -- (\r*\l,0);
\fill[opacity=.2] (\r*\l+\lx ,\ly) -- (\l+\lx,\ly) -- (\l+\lx,\d+\ly) -- (\r*\l+\lx,\d+\ly) -- (\r*\l+\lx,\ly);
\fill[opacity=.2] (\r*\lx ,\r*\ly) -- (\lx,\ly) -- (\lx,\ly+\d) -- (\r*\lx,\r*\ly+\d) -- (\r*\lx,\r*\ly);
\fill[opacity=.2] (\l+\r*\lx ,\r*\ly) -- (\l+\lx,\ly) -- (\l+\lx,\ly+\d) -- (\l+\r*\lx,\r*\ly+\d) -- (\l+\r*\lx,\r*\ly);

\node[nod] (u00) at (0,0) [label=below: $u$] {};
	\node[nod] (u10) at (\l,0) [label=below: $\widetilde{u}$] {};
	\node[nod] (u01) at  (\lx,\ly) [label=below: $\widehat{u}$] {};
	\node[nod1] (u11) at (\l+\lx,\ly) [label=below: $\widehat{\widetilde{u}}$] {};
	\node[nod] (v00) at (0,\d) [label=above: $v$] {};
	\node[nod1] (v10) at (\l,\d) [label=below right: $\widetilde{v}$] {};
	\node[nod1] (v01) at (\lx,\d+\ly) [label=above: $\widehat{v}$] {};
	\node[nod1] (v11) at (\l+\lx,\d+\ly) [label=above: $\widehat{\widetilde{v}}$] {};

\draw[font=\Large]  (\l/2,\d/2) node {$\HE^a$};
\draw[font=\Large]  (\l+\lx/2,\d/2+\ly/2) node {$\HE^a$};
\draw[font=\Large]  (\l/2+\lx/2,\d+\ly/2) node {$D$};

	\draw[thick]  (u00) -- (u10) -- (u11) -- (v11) -- (v10);
	\draw[thick]  (u00) -- (v00) -- (v01) -- (v11);
	\draw[thick]  (v00) -- (v10) -- (u10);
	\draw[thick,dashed]  (u00)  -- (u01) -- (u11);
	\draw[thick,dashed]  (u01) -- (v01);
\draw (\l/2,-\d/4) node {(c)};
	\end{tikzpicture}
\caption{Consistent cubes, with torqued $\HE$-equations. \label{ris}}
\end{figure}
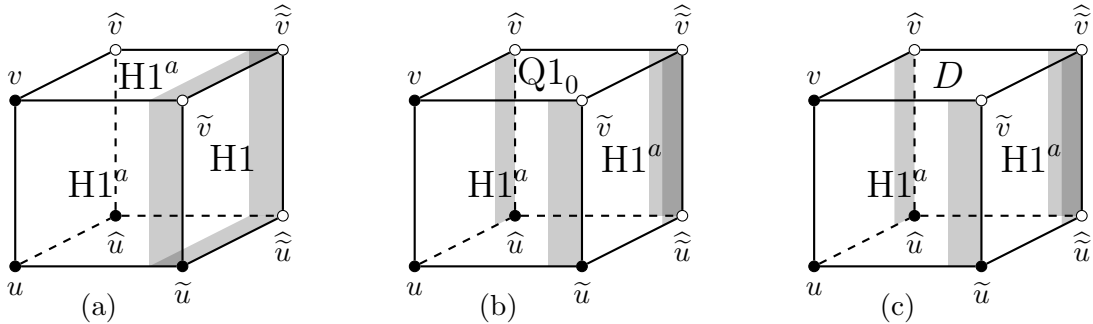

The consistent cubes in Figure \ref{ris} represent auto-BTs. Each auto-BT maps a solution, $u$,
to the equation on the bottom face, to a new solution (with parameter $r$), $v$, to the equation on the top face (in principle).
We now apply the same auto-BT but with parameter $s$ to both solutions $u,v$ to create two new solutions,
called $z,w$ respectively. A relation of the form $S(u,v,z,w,r,s)=0$ between the four solutions $u,v,z$
and $w$ is called a {\bf superposition principle}.

\begin{figure}[H]
\centering
\begin{tikzpicture}[scale=.5]
	\tikzstyle{nod1}= [circle, inner sep=0pt, fill=white, minimum size=4pt, draw]
	\tikzstyle{nod}= [circle, inner sep=0pt, fill=black, minimum size=4pt, draw]
	\def\lx{2.4}%
	\def\ly{1.22}%
	\def\l{4}%
	\def\d{4}%
    \def\ox{-2.5}
    \def\oy{-1.8}
    \def\s{1.9}
    \def\r{0.8}

\fill[opacity=.2] (\r*\l,0) -- (\l,0) -- (\l,\d) -- (\r*\l,\d) -- (\r*\l,0);
\fill[opacity=.2] (\r*\l+\lx ,\ly) -- (\l+\lx,\ly) -- (\l+\lx,\d+\ly) -- (\r*\l+\lx,\d+\ly) -- (\r*\l+\lx,\ly);
\fill[opacity=.2] (\r*\l,0) -- (\r*\l+\lx ,\ly) -- (\l+\lx,\ly) -- (\l,0) -- (\r*\l,0);
\fill[opacity=.2] (\r*\l,\d) -- (\r*\l+\lx ,\d+\ly) -- (\l+\lx,\d+\ly) -- (\l,\d) -- (\r*\l,\d);

\fill[opacity=.2] (\l*\s + \ox, \oy) -- (\l, 0) -- (\l*\r, 0) -- (\l*\r*\s + \ox, \oy) -- (\l*\s + \ox, \oy);
\fill[opacity=.2] (\l, \d) -- (\l*\s + \ox, \d*\s + \oy) -- (\l*\r*\s + \ox, \d*\s + \oy) -- (\l*\r, \d) -- (\l, \d);
\fill[opacity=.2] (\l*\s + \ox, \oy) -- (\l*\s + \ox, \d*\s + \oy) -- (\l*\r*\s + \ox, \d*\s + \oy) --
(\l*\r*\s + \ox, \oy) -- (\l*\s + \ox, \oy);

\fill[opacity=.2] (\l*\s + \ox, \d*\s + \oy) -- (\l*\s + \lx*\s + \ox, \d*\s + \ly*\s + \oy)
-- (\l*\r*\s + \lx*\s + \ox, \d*\s + \ly*\s + \oy) -- (\l*\r*\s + \ox, \d*\s + \oy) -- (\l*\s + \ox, \d*\s + \oy);
\fill[opacity=.2] (\lx + \l, \ly + \d) -- (\l*\s + \lx*\s + \ox, \d*\s + \ly*\s + \oy) -- (\l*\r*\s + \lx*\s + \ox, \d*\s
+ \ly*\s + \oy) -- (\l*\r + \lx, \ly + \d) -- (\lx + \l, \ly + \d);

\fill[opacity=.2] (\l*\s + \lx*\s + \ox, \ly*\s + \oy) -- (\l*\s + \lx*\s + \ox, \d*\s + \ly*\s + \oy)
-- (\l*\r*\s + \lx*\s + \ox, \d*\s + \ly*\s + \oy) -- (\l*\r*\s + \lx*\s + \ox, \ly*\s + \oy)
-- (\l*\s + \lx*\s + \ox, \ly*\s + \oy);
\fill[opacity=.2] (\lx + \l, \ly) -- (\l*\s + \lx*\s + \ox, \ly*\s + \oy) -- (\l*\r*\s + \lx*\s + \ox, \ly*\s + \oy)
-- (\l*\r + \lx, \ly) -- (\lx + \l, \ly);

\fill[opacity=.2] (\l*\s + \ox, \oy) -- (\l*\s + \lx*\s + \ox, \ly*\s + \oy) -- (\l*\r*\s + \lx*\s + \ox, \ly*\s + \oy)
-- (\l*\r*\s + \ox, \oy) -- (\l*\s + \ox, \oy);

\node[nod] (z00) at (\ox,\oy) [label=below: $z$] {};
    \node[nod1] (z10) at (\ox+\s*\l,\oy) [label=below: $\widetilde{z}$] {};
	\node[nod1] (z01) at  (\ox+\s*\lx,\oy+\s*\ly) [label=left: {\tiny $\widehat{z}$}] {};
	\node[nod1] (z11) at (\ox+\s*\l+\s*\lx,\oy+\s*\ly) [label=below: $\widehat{\widetilde{z}}$] {};
	\node[nod1] (w00) at (\ox,\oy+\s*\d) [label=above: $w$] {};
	\node[nod1] (w10) at (\ox+\s*\l,\oy+\s*\d) [label=above: $\widetilde{w}$] {};
	\node[nod1] (w01) at (\ox+\s*\lx,\oy+\s*\d+\s*\ly) [label=above: $\widehat{w}$] {};
	\node[nod1] (w11) at (\ox+\s*\l+\s*\lx,\oy+\s*\d+\s*\ly) [label=above: $\widehat{\widetilde{w}}$] {};

    \node[nod] (u00) at (0,0) [label=left: $u$] {};
	\node[nod] (u10) at (\l,0) [label=below: $\widetilde{u}$] {};
	\node[nod] (u01) at  (\lx,\ly) [label=above right: $\widehat{u}$] {};
	\node[nod1] (u11) at (\l+\lx,\ly) [label=above right: $\widehat{\widetilde{u}}$] {};
	\node[nod] (v00) at (0,\d) [label=below right: $v$] {};
	\node[nod1] (v10) at (\l,\d) [label=below right: $\widetilde{v}$] {};
	\node[nod1] (v01) at (\lx,\d+\ly) [label= below right: $\widehat{v}$] {};
	\node[nod1] (v11) at (\l+\lx,\d+\ly) [label=below right: $\widehat{\widetilde{v}}$] {};

\draw[font=\Large]  (\ox+\s*\l/2,\oy+\s*\d/2) node {$\HE^a$};
\draw[font=\Large]  (\ox+\s*\l+\s*\lx/2,\oy+\s*\d/2+\s*\ly/2) node {$\HE$};
\draw[font=\Large]  (\ox+\s*\l/2+\s*\lx/2,\oy+\s*\d+\s*\ly/2) node {$\HE^a$};

\draw[font=\Large]  (\ox/2,\oy/2+\s*\d/4+\d/4) node {$\HE$};

	\draw[thick]  (u00) -- (u10) -- (u11) -- (v11) -- (v10);
	\draw[thick]  (u00) -- (v00) -- (v01) -- (v11);
	\draw[thick]  (v00) -- (v10) -- (u10);
	\draw[thick,dashed]  (u00)  -- (u01) -- (u11);
	\draw[thick,dashed]  (u01) -- (v01);

	\draw[thick]  (z00) -- (z10) -- (z11) -- (w11) -- (w10);
	\draw[thick]  (z00) -- (w00) -- (w01) -- (w11);
	\draw[thick]  (w00) -- (w10) -- (z10);
	\draw[thick,dashed]  (z00)  -- (z01) -- (z11);
	\draw[thick,dashed]  (z01) -- (w01);

\draw (u00) -- (z00);
\draw (u10) -- (z10);
\draw (u01) -- (z01);
\draw (u11) -- (z11);
\draw (v00) -- (w00);
\draw (v10) -- (w10);
\draw (v01) -- (w01);
\draw (v11) -- (w11);

\draw (\l/2,-2*\d/3) node {(a)};
\end{tikzpicture}
\hspace{1cm}
\begin{tikzpicture}[scale=.5]
	\tikzstyle{nod1}= [circle, inner sep=0pt, fill=white, minimum size=4pt, draw]
	\tikzstyle{nod}= [circle, inner sep=0pt, fill=black, minimum size=4pt, draw]
	\def\lx{2.4}%
	\def\ly{1.22}%
	\def\l{4}%
	\def\d{4}%
    \def\ox{-2.5}
    \def\oy{-1.8}
    \def\s{1.9}
    \def\r{0.8}

\fill[opacity=.2] (\r*\l,0) -- (\l,0) -- (\l,\d) -- (\r*\l,\d) -- (\r*\l,0);
\fill[opacity=.2] (\r*\l+\lx ,\ly) -- (\l+\lx,\ly) -- (\l+\lx,\d+\ly) -- (\r*\l+\lx,\d+\ly) -- (\r*\l+\lx,\ly);
\fill[opacity=.2] (\r*\lx ,\r*\ly) -- (\lx,\ly) -- (\lx,\ly+\d) -- (\r*\lx,\r*\ly+\d) -- (\r*\lx,\r*\ly);
\fill[opacity=.2] (\l+\r*\lx ,\r*\ly) -- (\l+\lx,\ly) -- (\l+\lx,\ly+\d) -- (\l+\r*\lx,\r*\ly+\d) -- (\l+\r*\lx,\r*\ly);

\fill[opacity=.2] (\l*\s + \ox, \oy) -- (\l, 0) -- (\l*\r, 0) -- (\l*\r*\s + \ox, \oy) -- (\l*\s + \ox, \oy);
\fill[opacity=.2] (\l, \d) -- (\l*\s + \ox, \d*\s + \oy) -- (\l*\r*\s + \ox, \d*\s + \oy) -- (\l*\r, \d) -- (\l, \d);
\fill[opacity=.2] (\l*\s + \ox, \oy) -- (\l*\s + \ox, \d*\s + \oy) -- (\l*\r*\s + \ox, \d*\s + \oy) -- (\l*\r*\s + \ox, \oy)
-- (\l*\s + \ox, \oy);

\fill[opacity=.2] (\lx + \l, \ly + \d) -- (\l*\s + \lx*\s + \ox, \d*\s + \ly*\s + \oy)
-- (\l*\r*\s + \lx*\s + \ox, \d*\s + \ly*\s + \oy) -- (\l*\r + \lx, \ly + \d) -- (\lx + \l, \ly + \d);

\fill[opacity=.2] (\l*\s + \lx*\s + \ox, \ly*\s + \oy) -- (\l*\s + \lx*\s + \ox, \d*\s + \ly*\s + \oy)
-- (\l*\r*\s + \lx*\s + \ox, \d*\s + \ly*\s + \oy) -- (\l*\r*\s + \lx*\s + \ox, \ly*\s + \oy)
-- (\l*\s + \lx*\s + \ox, \ly*\s + \oy);
\fill[opacity=.2] (\lx + \l, \ly) -- (\l*\s + \lx*\s + \ox, \ly*\s + \oy) -- (\l*\r*\s + \lx*\s + \ox, \ly*\s + \oy)
-- (\l*\r + \lx, \ly) -- (\lx + \l, \ly);

\fill[opacity=.2] (\lx + \l, \ly) -- (\l*\s + \lx*\s + \ox, \ly*\s + \oy) -- (\lx*\r*\s + \l*\s + \ox, \ly*\r*\s + \oy)
-- (\lx*\r + \l, \ly*\r) -- (\lx + \l, \ly);
\fill[opacity=.2] (\l*\s + \lx*\s + \ox, \ly*\s + \oy) -- (\l*\s + \lx*\s + \ox, \d*\s + \ly*\s + \oy)
-- (\lx*\r*\s + \l*\s + \ox, \ly*\r*\s + \d*\s + \oy) -- (\lx*\r*\s + \l*\s + \ox, \ly*\r*\s + \oy)
-- (\l*\s + \lx*\s + \ox, \ly*\s + \oy);
\fill[opacity=.2] (\lx + \l, \ly + \d) -- (\l*\s + \lx*\s + \ox, \d*\s + \ly*\s + \oy)
-- (\lx*\r*\s + \l*\s + \ox, \ly*\r*\s + \d*\s + \oy) -- (\lx*\r + \l, \ly*\r + \d) -- (\lx + \l, \ly + \d);

\fill[opacity=.2] (\lx, \ly + \d) -- (\lx*\s + \ox, \d*\s + \ly*\s + \oy) -- (\lx*\r*\s + \ox, \ly*\r*\s + \d*\s + \oy)
-- (\lx*\r, \ly*\r + \d) -- (\lx, \ly + \d);
\fill[opacity=.2] (\lx*\s + \ox, \d*\s + \ly*\s + \oy) -- (\lx*\s + \ox, \ly*\s + \oy) -- (\lx*\r*\s + \ox, \ly*\r*\s + \oy)
-- (\lx*\r*\s + \ox, \ly*\r*\s + \d*\s + \oy) -- (\lx*\s + \ox, \d*\s + \ly*\s + \oy);
\fill[opacity=.2] (\lx, \ly) -- (\lx*\s + \ox, \ly*\s + \oy) -- (\lx*\r*\s + \ox, \ly*\r*\s + \oy)
-- (\lx*\r, \ly*\r) -- (\lx, \ly);

\node[nod] (z00) at (\ox,\oy) [label=below: $z$] {};
    \node[nod1] (z10) at (\ox+\s*\l,\oy) [label=below: $\widetilde{z}$] {};
	\node[nod1] (z01) at  (\ox+\s*\lx,\oy+\s*\ly) [label=left: {\tiny $\widehat{z}$}] {};
	\node[nod1] (z11) at (\ox+\s*\l+\s*\lx,\oy+\s*\ly) [label=below: $\widehat{\widetilde{z}}$] {};
	\node[nod1] (w00) at (\ox,\oy+\s*\d) [label=above: $w$] {};
	\node[nod1] (w10) at (\ox+\s*\l,\oy+\s*\d) [label=above: $\widetilde{w}$] {};
	\node[nod1] (w01) at (\ox+\s*\lx,\oy+\s*\d+\s*\ly) [label=above: $\widehat{w}$] {};
	\node[nod1] (w11) at (\ox+\s*\l+\s*\lx,\oy+\s*\d+\s*\ly) [label=above: $\widehat{\widetilde{w}}$] {};

    \node[nod] (u00) at (0,0) [label=left: $u$] {};
	\node[nod] (u10) at (\l,0) [label=below: $\widetilde{u}$] {};
	\node[nod] (u01) at  (\lx,\ly) [label=above right: $\widehat{u}$] {};
	\node[nod1] (u11) at (\l+\lx,\ly) [label=above right: $\widehat{\widetilde{u}}$] {};
	\node[nod] (v00) at (0,\d) [label=below right: $v$] {};
	\node[nod1] (v10) at (\l,\d) [label=below right: $\widetilde{v}$] {};
	\node[nod1] (v01) at (\lx,\d+\ly) [label= below right: $\widehat{v}$] {};
	\node[nod1] (v11) at (\l+\lx,\d+\ly) [label=below right: $\widehat{\widetilde{v}}$] {};

\draw[font=\Large]  (\ox+\s*\l/2,\oy+\s*\d/2) node {$\HE^a$};
\draw[font=\Large]  (\ox+\s*\l+\s*\lx/2,\oy+\s*\d/2+\s*\ly/2) node {$\HE^a$};
\draw[font=\Large]  (\ox+\s*\l/2+\s*\lx/2,\oy+\s*\d+\s*\ly/2) node {$\QE_0$};

\draw[font=\Large]  (\ox/2,\oy/2+\s*\d/4+\d/4) node {$\HE$};

	\draw[thick]  (u00) -- (u10) -- (u11) -- (v11) -- (v10);
	\draw[thick]  (u00) -- (v00) -- (v01) -- (v11);
	\draw[thick]  (v00) -- (v10) -- (u10);
	\draw[thick,dashed]  (u00)  -- (u01) -- (u11);
	\draw[thick,dashed]  (u01) -- (v01);

	\draw[thick]  (z00) -- (z10) -- (z11) -- (w11) -- (w10);
	\draw[thick]  (z00) -- (w00) -- (w01) -- (w11);
	\draw[thick]  (w00) -- (w10) -- (z10);
	\draw[thick,dashed]  (z00)  -- (z01) -- (z11);
	\draw[thick,dashed]  (z01) -- (w01);

\draw (u00) -- (z00);
\draw (u10) -- (z10);
\draw (u01) -- (z01);
\draw (u11) -- (z11);
\draw (v00) -- (w00);
\draw (v10) -- (w10);
\draw (v01) -- (w01);
\draw (v11) -- (w11);

\draw (\l/2,-2*\d/3) node {(b)};
\end{tikzpicture}

\caption{\label{4DC} 4D cubes representing auto-BTs with their superposition principles.}
\end{figure}
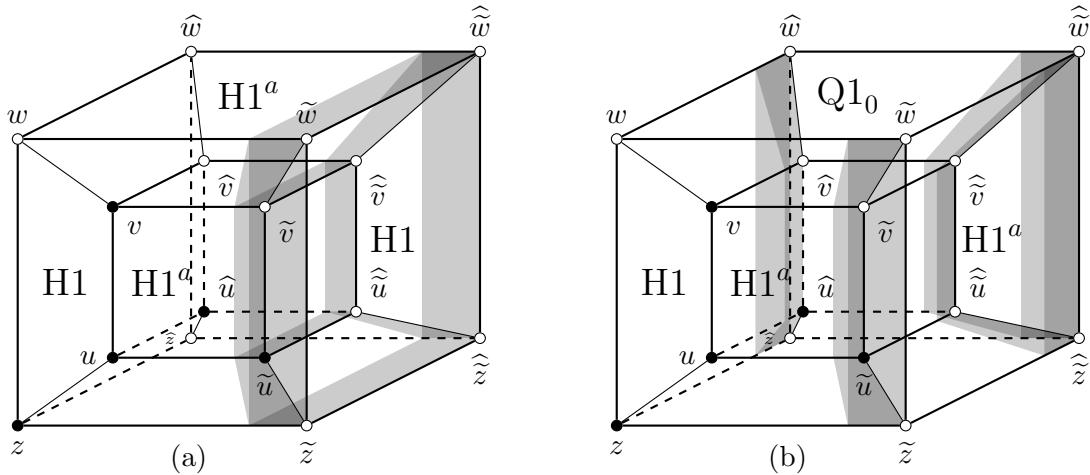

It can be seen from Figure \ref{4DC} that the superposition principles for both the asymmetric auto-BT
and the symmetric auto-BT is equal to the equation $\HE$.
Each 4D-cube consists of 7 3D-cubes (center, top, bottom, front, back, left and right),
excluding the outer 3D-cube. We have placed the 3D cube from Figure \ref{ris}(a)
in the center of Figure \ref{4DC}(a) and, as we apply the same auto-BT to both $u$ and $v$,
we place Figure \ref{ris}(a) also in the top and bottom cubes of Figure \ref{4DC}(a).
As parallel faces carry the same equation,
in the front and back cubes the same configuration as in Figure \ref{ris}(a) appears,
whereas in the left and right cubes we find no torqued equations.
All four cubes left, right, front and back are CAC if we impose the equation $\HE=0$ on the inner faces,
i.e. the face $z-u-v-w$ and their shifts. Similarly, placing the cube from Figure \ref{ris}(b) in the center,
top and bottom cube of a 4D-cube yields the cube from Figure \ref{ris}(a) on the remaining 4 cubes,
as depicted in Figure \ref{4DC}(b), where the equation $\QE_0$ may be replaced by $D$ (\ref{eqD}).

\section{Torqued auto-BTs for ABS equations and torqued ABS equations}\label{sec-4}
The ABS equations are
\begin{align*}
\HE:\quad &(u-\h{\t{u}})(\t{u}-\h{u})-p+q=0,\\
\HT:\quad &(u-\h{\t{u}})(\t{u}-\h{u})-(p-q)(u+\t{u}+\h{u}+\h{\t{u}}+p+q)=0,\\
\HD_\delta:\quad &p(u\t{u}+\h{u}\h{\t{u}})-q(u\h{u}+\t{u}\h{\t{u}})+\delta(p^2-q^2)=0,\\
\AE_\delta:\quad &p(u+\h{u})(\t{u}+\h{\t{u}})-q(u+\t{u})(\h{u}+\h{\t{u}})-\delta^2pq(p-q)=0,\\
\AT:\quad &p(1-q^2)(u\t{u}+\h{u}\h{\t{u}})-q(1-p^2)(u\h{u}+\t{u}\h{\t{u}})
-(p^2-q^2)(1+u\t{u}\h{u}\h{\t{u}})=0,\\
\QE_\delta:\quad &p(u-\h{u})(\t{u}-\h{\t{u}})-q(u-\t{u})(\h{u}-\h{\t{u}})+\delta^2pq(p-q)=0,\\
\QT: \quad &p(u-\h{u})(\t{u}-\h{\t{u}})-q(u-\t{u})(\h{u}-\h{\t{u}})
+pq(p-q)(u+\t{u}+\h{u}+\h{\t{u}}-p^2+pq-q^2)=0,\\
\QD_\delta:\quad &p(1-q^2)(u\h{u}+\t{u}\h{\t{u}})-q(1-p^2)(u\t{u}+\h{u}\h{\t{u}})
-(p^2-q^2)\left(\t{u}\h{u}+u\h{\t{u}}+\frac{\delta^2(1-p^2)(1-q^2)}{4pq}\right)=0,\\
\QV:\quad &\mathrm{sn}(p)(u\t{u}+\h{u}\h{\t{u}})-\mathrm{sn}(q)(u\h{u}+\t{u}\h{\t{u}})
+\mathrm{sn}(p-q)\Big(k\,\mathrm{sn}(p)\mathrm{sn}(q)(u\t{u}\h{u}\h{\t{u}}+1)
-\t{u}\h{u}-u\h{\t{u}} \Big)=0,
\end{align*}
where we have taken the Hietarinta form of $\QV$, found in \cite{Hie},
in which $k$ is the elliptic modulus of the Jacobi sine function $\mathrm{sn}$.

For some of the ABS equations the parameter transformation in their torqued counterpart is different than the additive one we have seen for $\HE$. We provide details for $\HD_0$, where the parameter transformation is multiplicative.

\medskip
\noindent
{\bf Multiplicative example, $\HD_0$.} The natural auto-BT for $\HD_0$ is given by
\begin{equation} \label{H3BT}
A=p(u\t{u}+v\t{v})-r(uv+\t{u}\t{v})=0,\qquad B=q(u\h{u}+v\h{v})-r(uv+\h{u}\h{v})=0.
\end{equation}
Solving the equations $A=\widehat{A}=B=\widetilde{B}=0$ yields two solutions, which gives rise to:
\begin{itemize}
\item a decoupled system,
\begin{equation*}
    p(u\t{u}+\h{u}\th{u})-q(u\h{u}+\t{u}\th{u})=0,\qquad p(v\t{v}+\h{v}\th{v})-q(v\h{v}+\t{v}\th{v})=0,
\end{equation*}
which is $\HD_0[u]=\HD_0[v]=0$, and can be written in tetrahedral form \eqref{TT} as
\begin{equation} \label{dH3}
\begin{split}
&p(r^2-q^2)( u\h v + \th u\t v) + q(p^2-r^2)(\th u\h v + u\t v) + r(q^2-p^2)(u\th u + \t v \h v)=0,\\
&p(r^2-q^2)(\h u v + \t u \th v) + q(p^2-r^2)(\h u\th v + \t u v) + r(q^2-p^2)(\t u \h u + v\th v)=0,
\end{split}
\end{equation}
\item and a coupled system,
\begin{equation} \label{cH3}
\t{u}\h{u}-v\th{v}=0,\qquad  \t{v}\h{v}-u\th{u}=0,
\end{equation}
which happens to be tetrahedral, and can be written in planar form \eqref{RR} as
\begin{equation} \label{pcH3}
(u\th v + \h u \t v)p - (u\h u + \t v \th v)r=0,\qquad (v\th u + \h v \t u)p - (v\h v + \t u \th u)r=0.
\end{equation}
\end{itemize}

\noindent
{\bf Asymmetric auto-BT.}
The asymmetric auto-BT is obtained by switching $\t u\leftrightarrow \t v$ in (\ref{H3BT}),
\begin{equation} \label{aH3BT}
A=p(u\t{v}+v\t{u})-r(uv+\t{u}\t{v})=0,\qquad B=q(u\h{u}+v\h{v})-r(uv+\h{u}\h{v})=0.
\end{equation}
Switching $\t u\leftrightarrow \t v$ and $\th u\leftrightarrow \th v$ in the coupled system (\ref{pcH3}) leads to the decoupled system
\begin{equation} \label{tH3}
p(u \th u + \t u \h u) - r(u\h u + \t u \th u)=0,\qquad
p(v \th v + \t v \h v) - r(v\h v + \t v \th v)=0.
\end{equation}
These equations should not depend on $r$. So we introduce the variable
\begin{equation} \label{ps}
p^\ast=p/r
\end{equation}
and the following notation.

\medskip
\noindent
{\bf Notation.}
We will adopt the notation $Q^m=0$ with
\begin{equation} \label{mnot}
Q^m(p,q)[u]=Q^m([u,\h u],[\t u,\th u]; p,q):=Q([u,\h u],\sigma[\t u,\th u]; pq, q)
\end{equation}
for the multiplicative torqued version of a lattice equation $Q=0$.

\medskip
\noindent
By doing so, the equations (\ref{tH3}) are captured by
\[
\HD_0^m(p^\ast,q)[u]=\HD_0^m(p^\ast,q)[v]=0,
\]
and the $A$-part of the auto-BT (\ref{aH3BT}) becomes
\[
A=\HD_0^m([u,v],[\t u,\t v];p^\ast,r)=0.
\]

\medskip
\noindent
{\bf Symmetric auto-BT.}
The symmetric auto-BT is obtained by switching $\t u\leftrightarrow \t v$ and $\h u\leftrightarrow \h v$ in (\ref{H3BT}),
\begin{equation} \label{sH3BT}
A=p(u\t{v}+v\t{u})-r(uv+\t{u}\t{v})=0,\qquad B=q(u\h{v}+v\h{u})-r(uv+\h{u}\h{v})=0,
\end{equation}
which is conveniently written as $\HD_0^m(p^\ast,r)=\HD_0^m(q^\ast,r)=0$,
where $q^\ast=q/r$.
Switching $\t u\leftrightarrow \t v$ and $\h u\leftrightarrow \h v$ in the tetrahedral system (\ref{dH3}), leads to the decoupled system (up to an irrelevant factor $r^3$)
\begin{equation}
\QD_0(p^\ast,q^\ast)[u]=\QD_0(p^\ast,q^\ast)[v]=0.
\end{equation}

As for the $\HE$ equation, switching $\t u\leftrightarrow \t v$ and $\h u\leftrightarrow \h v$
in the tetrahedral system (\ref{cH3}) leads to another decoupled system, of which the $u$-equation is the quotient equation
\begin{equation} \label{eqK}
K: u\th u - \t u\h u =0.
\end{equation}

\bigskip
\noindent
For equations $\HD_\delta,\AT$ and $\QD_\delta$ the dependence on the parameters is the same as in the $\HD_0$ case. As we no longer have dependence on $p,q$, we will omit the asterix in the sequel (remember, we were also going to omit the prime we used for the additive parameters). For each ABS equation the natural auto-BT is an auto-BT for an equation which depends on both $u,v$ and which is planar. As we also know each ABS equation has the tetrahedron property, this leads to the following theorem, whose proof is obtained by direct calculation.
\begin{theorem}\label{Thm}
For each ABS equation both system (\ref{abt}) and system (\ref{sbt}) provide an auto-BT for a quad-equation. For equations $\HT,\AE,\QE,\QT$ and $\QV$ the dependence on the parameters is additive. For equations $\HD,\AT$ and $\QD$ the dependence on the parameters is multiplicative.
The asymmetric and symmetric torqued auto-BTs (t-auto-BTs) and the equations they give rise to are provided in Table \ref{Tafel1} and Table \ref{Tafel2}.
\begin{table}[H]
\begin{center}
\begin{tabular}{l|l|l}
Equation & Asymmetric t-auto-BT & Symmetric t-auto-BT \\
$Q(p,q)=0$ & $Q^a(p,r)=Q(q,r)=0$ & $Q^a(p,r)=Q^a(q,r)=0$ \\
\hline
$\HE$ & $\HE^a$ & $\QE_0$ or $D$\\
$\HT$ & $\HT^a$ & $\QE_1$\\
$\AE_\delta$ & $\AE^a_\delta$ & $\QE_\delta$\\
$\QE_\delta$ & $\QE_\delta^a$ & $\QE_\delta$\\
$\QT$ & $\QT^a$ & $\QT$\\
$\QV$ & $\QV^a$ & $\QV$
\end{tabular}
\caption{\label{Tafel1} Torqued auto-B\"acklund transformations, additive cases.}
\end{center}
\end{table}
\begin{table}[H]
\begin{center}
\begin{tabular}{l|l|l}
\hline
Equation & Asymmetric t-auto-BT & Symmetric t-auto-BT \\
$Q(p,q)=0$ & $Q^m(p,r)=Q(q,r)=0$ & $Q^m(p,r)=Q^m(q,r)=0$ \\
\hline
$\HD_0$ & $\HD^m_0$ & $K$\\
$\HD_\delta$ & $\HD^m_\delta$ & $\QD_0$\\
$\AT$ & $\AT^m$ & $\QD_0$\\
$\QD_\delta$ & $\QD^m_\delta$ & $\QD_\delta$\\
\end{tabular}
\caption{\label{Tafel2} Torqued auto-B\"acklund transformations, multiplicative cases.}
\end{center}
\end{table}
For each auto-BT mentioned in Tables \ref{Tafel1} and \ref{Tafel2},
the superposition principle is given by the original equation, i.e. it takes the form
\begin{equation}
Q([u,z],[v,w];\ r,s)=0,
\end{equation}
cf. Figure \ref{4DC}.
\end{theorem}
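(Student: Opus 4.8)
The plan is to reduce the entire statement to the corollary of Lemma~\ref{THM}, applied one equation at a time to the natural auto-BT $A=Q(p,r)=0$, $B=Q(q,r)=0$. Lemma~\ref{THM} already guarantees that torquing a $\sigma$-invariant CAC system yields again a CAC system, so what remains is to check, for each ABS equation, that its natural auto-BT supplies the two structural inputs the corollary needs: that one of the two solutions of the extended system $\h A=\t B=0$ is \emph{tetrahedral} and the other is \emph{planar}. The tetrahedron property is already known for all nine ABS equations, so the first solution always decouples into $Q(u)=Q(v)=0$ and can be cast in the tetrahedral form \eqref{TT}; this feeds the symmetric torquing $a=b=1$. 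The planarity of the second solution is the genuinely new ingredient and must be verified directly: one solves $A=B=0$ for $\t v,\h v$, substitutes into $\h A=\t B=0$, and checks that the coupled second solution can be written, after eliminating one shifted variable via $A=B=0$, in the planar form \eqref{RR}; this feeds the asymmetric torquing $a=1$, $b=0$.

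With these two facts in hand I would carry out, for each remaining equation in turn, the computation already displayed in the worked $\HE$ and $\HD_0$ examples. Concretely: (i) relabel $\t u\leftrightarrow\t v$ (asymmetric) or $\t u\leftrightarrow\t v$ together with $\h u\leftrightarrow\h v$ (symmetric) in the planar, respectively tetrahedral, solution so that it decouples; (ii) read off the resulting $u$-equation; (iii) introduce $p'=p-r$, $q'=q-r$ in the additive cases $\HT,\AE,\QE,\QT,\QV$, or $p^\ast=p/r$, $q^\ast=q/r$ in the multiplicative cases $\HD,\AT,\QD$, the transformation being chosen precisely so that the torqued equation loses its dependence on the B\"acklund parameter $r$; and (iv) identify the $r$-independent equation with the named entry of Table~\ref{Tafel1} or Table~\ref{Tafel2}, via the notations $Q^a$ \eqref{Qa} and $Q^m$ \eqref{mnot}. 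The symmetric column is where type-raising occurs ($\HE\mapsto\QE_0$, $\HT\mapsto\QE_1$, $\AE_\delta\mapsto\QE_\delta$, $\HD_\delta\mapsto\QD_0$, and so on), while the degenerate outputs $D$ \eqref{eqD} and $K$ \eqref{eqK} appear exactly for $\HE$ and $\HD_0$, where the \emph{second} solution also decouples under the symmetric switch (the weak cases).

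For the superposition principle I would argue by 4D consistency, as pictured in Figure~\ref{4DC}. Applying the torqued auto-BT with parameter $r$ to $u$ produces $v$, and applying it again with a second parameter $s$ to both $u$ and $v$ produces $z$ and $w$; stacking the two copies of the torqued cube gives a four-dimensional hypercube whose centre, top and bottom cells are the torqued cube and whose four remaining cells are CAC once an equation is imposed on the connecting faces $z$-$u$-$v$-$w$ and their shifts. Rerunning the relabelling-of-variables argument from the proof of Lemma~\ref{THM}, now in the new $r\leftrightarrow s$ direction (which is left untouched by the torque), shows this connecting equation is forced, and a direct computation identifies it as the original equation in the form $Q([u,z],[v,w];\,r,s)=0$. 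The $\HE$ verification already exhibits this, and the same bookkeeping applies verbatim to each row.

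The main obstacle is combinatorial rather than conceptual: it is the faithful case-by-case identification in step (iv), in particular confirming planarity of the second solution for the fully multi-affine equations $\QV$ and $\QD_\delta$, where the elliptic and the $\delta$-dependent pieces make the elimination delicate, and correctly matching the symmetric torqued equation to its raised type after the two-way relabelling and the parameter change. A secondary subtlety is verifying that the chosen transformation is the correct one for each equation, equivalently that the torqued equation becomes genuinely $r$-free only under $p-r$, respectively $p/r$; this is precisely what separates Table~\ref{Tafel1} from Table~\ref{Tafel2}.
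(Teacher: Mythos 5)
Your proposal is correct and follows essentially the same route as the paper: the authors likewise reduce the theorem to Lemma \ref{THM} and its corollary (planarity feeding the asymmetric torque, the tetrahedron property feeding the symmetric one), verify planarity of the second solution and carry out the relabelling and parameter substitution by direct calculation for each ABS equation, and read off the superposition principle from the 4D-cube configuration of Figure \ref{4DC}. The only difference is one of presentation: the paper compresses all of this into the phrase ``whose proof is obtained by direct calculation,'' whereas you spell out the steps of that calculation.
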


Each additive torqued equation of type $Q$ satisfies $Q^a(p,q)=Q(-p,q)$ (for $\QV$ this is due to the anti-symmetry of the Jacobi sine function), whereas for the multiplicative equation $\QD_\delta^m(p,q)=qp^2\QD(1/p,q)$, and for the $A$-equations we have
$
\AE^a([u,\h u],[\t u, \th u];\ p,q)=\AE([u,\h u],[-\t u, -\th u];\ -p,q)$, and $\AT^a([u,\h u],[\t u, \th u];\ p,q)=qp^2\t u\th u\AT([u,\h u],[1/\t u,1/\th u];\ 1/p,q)$. It is not difficult to verify that we can adjust the signs on the whole cube, cf. \cite[proof of Theorem 4]{ABS09}, to show that, for these equations, the torqued cubes are equivalent to the natural ones.

The consistent cubes with H-type equations are special cases of consistent cubes listed in \cite{Boll,RB11}. This connection will be made precise in section 6. Whereas for $\epsilon=0$ the rhombic version of $H^\epsilon_i$ corresponds to ABS equation H$_i$, the trapezoidal version of $H^0_i$ corresponds to a t-ABS equation.

The B\"acklund transformations given in \cite[Table 2]{Atk08} consist of torqued ABS equations. Our result explains how the corresponding  consistent cubes relate to the CAC property of ABS-equations.

\section{Torqued non-natural auto-BTs for $\HD_\delta,\AT$ and $\QD_0$}
One does not have to take the natural auto-BT as the starting point. For example, consider the equations
\begin{equation} \label{dabth3}
A=u\t u+v\t v+\delta p=0,\qquad
B=u\h u+v\h v+\delta q=0,
\end{equation}
which provide an auto-BT for H$3_\delta$ \cite{ZZ}, and note that it does not depend on a B\"acklund parameter.
Performing the symmetric switch we obtain
\begin{equation} \label{dabth3s}
u\t v+v\t u+\delta p=0,\qquad
u\h v+v\h u+\delta q=0,
\end{equation}
which provide an auto-BT for H$3_0$. Here, the parameter $\delta$ now acts as a B\"acklund parameter. The asymmetric switch does not lead to a decoupled system.

To find a superposition principle for the auto-BT (\ref{dabth3}),
we pose the following equations on the front cube of a 4D cube (see Figure \ref{4DC} for the labeling of the vertices):
\begin{equation}\label{fc}
\begin{split}
u\t u+v\t v+\delta p=0 &\qquad (\text{back}) \\
z\t z+w\t w+\delta p=0 &\qquad (\text{front}) \\
u\t u+z\t z+\delta p=0 &\qquad (\text{bottom}) \\
v\t v+w\t w+\delta p=0 &\qquad (\text{top}).
\end{split}
\end{equation}
These equations are coupled and we can neither eliminate all variables $u,v,w,z$
nor their upshifts $\t u,\t v,\t w,\t z$. We can however derive the equations
\[
u\t u - w\t w =0,\qquad  v\t v - z\t z =0,
\]
which show that a superposition principle is given by the (reducible) equation
\begin{equation} \label{dspp}
(u-w)(v-z)=0.
\end{equation}
The equations (\ref{fc}) decouple after application of the symmetric switch, here we can derive both equations $u=w$ and $v=z$, as well as their upshifted  versions.
Consequently, (\ref{dspp}) is also a superposition principle for (\ref{dabth3s}).

\medskip
The non-natural auto-BT (\ref{dabth3}) can be conveniently written as
\begin{equation} \label{DaBT}
Q(p,0)=Q(q,0)=0,
\end{equation}
where $Q=\HD_\delta$. Both equations $\AT$ and $\QD_0$ give rise to non-natural auto-BTs of the same form (\ref{DaBT}), cf. \cite{ZZ}, and to torqued versions thereof, as per the below theorem.

\medskip
{\bf Notation.}
We will adopt the notation $Q^t=0$ with
\begin{equation}
Q^t(p,q)[u]=Q^t([u,\h u],[\t u,\th u]; p,q)=Q([u,\h u],\sigma[\t u,\th u]; p, q)
\end{equation}
for the (plain) torqued version of a lattice equation $Q=0$.

\begin{theorem}\label{Thm2}
For the ABS equations $\HD_\delta$, $\AT$ and $\QD_0$, the system (\ref{DaBT}) provides a non-natural auto-BT. In each case, the system (\ref{sbt}) provides a torqued auto-BT for $\HD_0$, and for $\AT$ and $\QD_0$ the system (\ref{abt}) provides an auto-BT for a quad-equation which depends on only lattice parameter. The non-natural auto-BTs and their torqued counterparts are provided in Table \ref{Tafel3}.
\begin{table}[H]
\begin{center}
\begin{tabular}{l|l|l|l}
\hline
Equation & Auto-BT & Symmetric t-auto-BT & Asymmetric t-auto-BT \\
$Q(p,q)=0$ & $Q(p,0)=Q(q,0)=0$ & $Q^t(p,0)=Q^t(q,0)=0$ & $Q^t(p,0)=Q(q,0)=0$ \\
\hline
$\HD_\delta$ & $\HD_\delta$ & $\HD_0$ & $-$\\
$\AT$ & $\AT$ \text{ or } $P:u\t u\h u \th u=1$ & $\HD_0$ & $\AT(0,q^{-1})$\\
$\QD_0$ & $\QD_0$ \text{ or } $K$ & $\HD_0$ & $\QD_0(0,q)$\\
\end{tabular}
\caption{\label{Tafel3} Degenerate auto-B\"acklund transformations.}
\end{center}
\end{table}
All the auto-BTs in Table \ref{Tafel3} admit (\ref{dspp}) as their superposition principle.
The symmetric torqued auto-BTs obtained from $\AT$ and $\QD_\delta$ furthermore admit the superposition principle
\begin{equation} \label{enilm}
(uw-1)(vz-1)=0.
\end{equation}
\end{theorem}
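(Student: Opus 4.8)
The plan is to prove the theorem by direct calculation, treating each of the three equations $\HD_\delta$, $\AT$ and $\QD_0$ in turn; the $\HD_\delta$ case has already been carried out in the preceding discussion, equations (\ref{dabth3})--(\ref{dspp}), so the real work is to run the same template for $\AT$ and $\QD_0$. First I would write out the non-natural auto-BT (\ref{DaBT}) explicitly in each case. Setting the B\"acklund parameter to zero in $\AT([u,v],[\t u,\t v];\ p,r)$ and dividing by $p$ gives $A=u\t u+v\t v-p(1+u\t u v\t v)=0$, while for $\QD_0$ one finds $A=uv+\t u\t v-p(u\t v+v\t u)=0$ (with $B$ the analogous $q$-equation). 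Each such $A$ is manifestly invariant under $\sigma$. That (\ref{DaBT}) is indeed an auto-BT is established in \cite{ZZ}; if a self-contained argument is preferred, I would verify CAC directly by solving $A=0$, $B=0$ for $\t v,\h v$, substituting into $\h A=\t B=0$, and checking that the resulting value of $\th v$ is obtained consistently.

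The central step is to solve the cube $A=\h A=B=\t B=0$ for $(\th u,\th v)$, which yields two branches, and to identify them. One branch is the decoupled original equation, giving the first alternative in the ``Auto-BT'' column of Table \ref{Tafel3} ($\AT[u]=\AT[v]=0$ resp. $\QD_0[u]=\QD_0[v]=0$); the other is a coupled solution which, after substituting the solution of $A=B=0$ as in the derivation of (\ref{pcH3}), can be written in planar form and accounts for the alternatives $P$ resp. $K$. With the two branches and their planar/tetrahedron structure in hand, the torqued statements follow from Lemma \ref{THM}: since the system is $\sigma$-invariant, the symmetric switch ($a=b=1$) and the asymmetric switch ($a=1,b=0$) are automatically CAC, so it only remains to compute the switched equations and confirm that they decouple. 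I expect the symmetric switch (\ref{sbt}), which is an auto-BT for the $\sigma$-switched decoupled branch, to yield the single equation $\HD_0$ in all three cases; the asymmetric switch (\ref{abt}), acting on the planar branch, should yield the one-parameter equations $\AT(0,q^{-1})$ resp. $\QD_0(0,q)$, whereas for $\HD_\delta$ the asymmetric switch fails to decouple, explaining the dash in Table \ref{Tafel3}.

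For the superposition principles I would pose four copies of the $\t$-direction equation on the front cube of the 4D cube, exactly as in (\ref{fc}), and eliminate the $\t$-shifted fields. For $\HD_\delta$ the four equations are affine in the products $u\t u,\,v\t v,\,z\t z,\,w\t w$, so pairwise subtraction gives $u\t u=w\t w$ and $v\t v=z\t z$, from which (\ref{dspp}) follows. For $\AT$ and $\QD_0$ the extra bilinear term makes the elimination of, say, $p$ between two faces produce a factored expression of the shape $(\,\cdot\,)(\,\cdot\,)=0$: one factor reproduces the $\HD_\delta$-type relation and hence (\ref{dspp}), while the second factor encodes the reciprocal relation attached to the $P$/$K$ branch and yields the additional principle (\ref{enilm}). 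I would repeat the elimination in the $\h$-direction to confirm that $(u-w)(v-z)=0$ and $(uw-1)(vz-1)=0$ are the only surviving shift-free relations, and check that (\ref{fc}) decouples under the symmetric switch so that (\ref{dspp}) serves equally for (\ref{dabth3s}).

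The main obstacle I anticipate is the sheer weight of the algebra for $\AT$, whose coefficients $(1-p^2),(1-q^2)$ and quartic term $u\t u\h u\th u$ make both the two-branch factorization of the cube and the elimination for the superposition principle substantially heavier than the $\HD_\delta$ computation. In particular, the striking claim that three structurally different non-natural BTs all symmetrically torque onto the \emph{same} equation $\HD_0$ must be extracted by explicitly matching the decoupled branch in each case, and verifying that the second superposition factor genuinely collapses to (\ref{enilm}) rather than to some merely shifted relation requires keeping careful track of which face-differences are independent.
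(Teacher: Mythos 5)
Your proposal is correct and takes essentially the same route as the paper, which offers no separate proof of this theorem but establishes it by exactly the direct calculations you describe: the worked $\HD_\delta$ case (equations (\ref{dabth3})--(\ref{dspp})) serves as the template, Lemma \ref{THM} supplies consistency of the switched cubes, and elimination on the front cube of the 4D cube yields the superposition principles. Your explicit degenerate forms of $\AT(p,0)$ and $\QD_0(p,0)$ agree with the paper's conventions, and the remaining verifications for $\AT$ and $\QD_0$ are, as you anticipate, routine but heavy algebra.
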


The quad equation $\HD_\delta(p,0)$ is of type H$^6$, a special case of D$_4$, with $\delta_1=\delta_2=0$, cf. \cite{Boll}. The equations
$\AT_\delta(p,0)$ and $\QD_0(p,0)$ are of type H$^4$, equivalent to a special case of $\HD$. By suitable M\"obius transformations, equation $D$ (\ref{eqD}) relates to the equation of type H$^6$ named D$_1$, and equations $K$ (\ref{eqK}) and $P$ relate to equation D$_4$, with $\delta_1=\delta_2=\delta_3=0$. The equations (\ref{dspp}) and (\ref{enilm}) are of type H$^4$, and relate to $\HE$, with $p=q=0$. The consistent cubes in Table \ref{Tafel3} which contain $P$ and $K$ do not have the tetrahedron property. All other consistent cubes in Table \ref{Tafel3} do have that property. However, we have not been able to identify any of them in the classification given in \cite{Boll,RB11}.

According to \cite[Definitions 5.1]{JH19} the non-natural auto-BTs for $\AT$ and $\QD_0$, as well as the symmetric t-auto-BTs for $\HE$ and $\HD_0$, are weak auto-BTs, as they give rise to multiple equations.

\section{A torqued $\HD^*_0$ equation}
A list of multi-quadratic quad equations: $\HT^*$, $\HD_\delta^*$, $\AE^*$, $\AT^*$, $\QE^*$, $\QT^*$, $\QD_\delta^*$, $\QV_c^*$, where $\delta\in\{0,1\}$ and $c\in\{0,\pm 1, \pm i\}$, was given in \cite[Section 4]{AtkN-IMRN-2014}. Due to the special factorisation of their discriminants these models can be reformulated as single-valued systems. The models can be consistently posed on the cube, however sign choices lead to four solutions. Here, we show that the torqued version of $\HD_0^*$ forms a symmetric auto-BT for a multi-quartic quad equation, and that it admits an asymmetric auto-BT itself. Whether other multi-quadratic quad equations admit torqued versions remains to be seen.

For the $\HD^*_\delta$ equation
\begin{equation}\label{H30*}
\begin{split}\HD^*_\delta([u,\h u],[\t u,\th u]; p, q)=&(p-q)\left(p(u\h u-\t u\th u)^2-q(u\t u-\h u\th u)^2\right)+pq(u-\th u)^2(\t u-\h u)^2\\&-4\delta^2(p-q)(u-\th u)(\t u-\h u)=0,
\end{split}
\end{equation}
we find that $\th v$ satisfies
\begin{equation}\label{H30*thb}
\begin{split}
&u\left(p(q-r)^2(\h u v-\th v\t u)+q(p-r)^2(\t u v-\th v\h u)+r(p-q)^2(\t u\h u-\th v v)\right)\\
&+2(\sigma_1\sigma_2-\delta^2)(r-p)(r-q)(\th v- v)
+2(\sigma_1\sigma_3-\delta^2)(q-p)(q-r)(\th v-\h u)\\
&+2(\sigma_2\sigma_3-\delta^2)(p-q)(p-r)(\th v-\t u)=0
\end{split}
\end{equation}
with
\begin{equation} \label{sigmas}
\sigma_1^2=pu\t u + \delta^2,~~\sigma_2^2=qu\h u+ \delta^2,~~\sigma_3^2=ru v+ \delta^2.
\end{equation}
Although equation \eqref{H30*thb} involves five points, when $\delta=0$ the variable $u$ can be eliminated (as $\sigma_i\sigma_j/u$ does not depend on $u$) and therefore the equation \eqref{H30*} with $\delta=0$ possesses the tetrahedron property.

The multiplicative torqued version of $\HD^{*}_0$ is defined as, see notation (\ref{mnot}),
\begin{equation}\label{tH30*}
\HD^{*m}_0([u,\h u],[\t u, \th u];\, p,q)=(p-1)(p(u\h u-\t u\th u)^2-(u\th u-\h u\t u)^2)+p(u-\t u)^2(\th u-\h u)^2=0
\end{equation}
with vanishing parameter $q$.
\begin{theorem}
The symmetric system
\begin{equation}\label{H30*BT}
\begin{split}
&\HD^{*m}_0([u,v],[\t u,\t v];\, p,r)=(p-1)(p(uv-\t u\t v)^2-(u\t v-v\t u)^2)+p(u-\t u)^2(v-\t v)^2=0,\\
&\HD^{*m}_0([u,v],[\h u,\h v];\, q,r)=(q-1)(q(uv-\h u\h v)^2-(u\h v-v\h u)^2)+q(u-\h u)^2(v-\h v)^2=0
\end{split}
\end{equation}
acts as an auto-BT for
\begin{equation}\label{H30*1}
\begin{split}
&~~~~~~~~~~~~~~~~~p(q-1)^2(\h uu-\th u\t u)+q(p-1)^2(\t uu-\th u\h u)+(p-q)^2(\t u\h u-\th uu) \\
&+2\sigma_2(p-q)(p-1)(\th u-\t u)+2\sigma_1(q-p)(q-1)(\th u-\h u)+2\sigma_1\sigma_2(1-p)(1-q)(\th u/u-1)=0,
\end{split}
\end{equation}
which is equivalent to a multi-quartic quad equation\footnote{This equation can be obtained by substituting (\ref{sigmas}) into
\[
Q(\sigma_1,\sigma_2)Q(\sigma_1,-\sigma_2)Q(-\sigma_1,\sigma_2)Q(-\sigma_1,-\sigma_2)=0,
\]
where $Q(\sigma_1,\sigma_2)$ is the left hand side of (\ref{H30*1}). It has 2293 terms, it is multi-quartic in $u,\t u,\h u, \th u$ (and homogeneous of total degree 8) and it is multi-octic in $p,q$ (with total degree 12).}, and the asymmetric system
\[
\HD^{*m}_0([u,v],[\t u,\t v];\, p,r)=\HD^{*}_0([u,v],[\h u,\h v];\, q,r)=0
\]
provides an auto-BT for $\HD^{*m}_0(p,q)[u]$.
\end{theorem}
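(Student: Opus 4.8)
The plan is to obtain both halves of the statement as instances of the corollary to Lemma~\ref{THM}, applied to the natural auto-BT of $\HD^*_0$, treating the multi-quadratic equation one sign-branch at a time and reassembling the branches into a single-valued equation only at the very end. The prerequisites are that $\HD^*_0$ is CAC (established in \cite{AtkN-IMRN-2014}) and that its natural auto-BT
\[
A=\HD^*_0([u,v],[\t u,\t v];\,p,r)=0,\qquad B=\HD^*_0([u,v],[\h u,\h v];\,q,r)=0
\]
is invariant under $\sigma$, which is a direct check on the explicit form \eqref{H30*} (the pairing $[u,v]$, $[\t u,\t v]$ and the even powers in \eqref{H30*} make the interchange $u\leftrightarrow v$ a symmetry). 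With these two facts the hypotheses of Lemma~\ref{THM} hold, so both the symmetric ($a=b=1$) and asymmetric ($a=1,b=0$) torqued systems are \emph{automatically} CAC; the only remaining work is to identify the decoupled equations they are auto-BTs for and to confirm those equations involve $u$-variables only.

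First I would run the consistency computation: solve $A=B=0$ for $\t v,\h v$, then solve $\h A=\t B=0$. Being multi-quadratic, this yields the multi-valued equation \eqref{H30*thb} for $\th v$, whose coefficients carry the sign ambiguities of $\sigma_1,\sigma_2,\sigma_3$ from \eqref{sigmas}. At $\delta=0$ each product $\sigma_i\sigma_j$ acquires an overall factor $u$ (e.g. $\sigma_1\sigma_2=u\sqrt{pq\,\t u\h u}$), and the bracketed first line of \eqref{H30*thb} carries an explicit factor $u$; dividing through by $u$ removes all $u$-dependence, which is exactly the tetrahedron property in the sense of \eqref{TT}. Applying the symmetric switch \eqref{sbt} ($\t u\leftrightarrow\t v$, $\h u\leftrightarrow\h v$) then converts this tetrahedron equation into one in $u$-variables only, which I would verify is precisely \eqref{H30*1} (note that $\sigma_1,\sigma_2$ there reduce to $\sqrt{pu\t u}$, $\sqrt{qu\h u}$, depending on $u$-variables only). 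By the symmetric branch of the corollary, \eqref{H30*BT} is therefore an auto-BT for \eqref{H30*1}. To produce a genuine \emph{single-valued} quad equation I would form the product $Q(\sigma_1,\sigma_2)Q(\sigma_1,-\sigma_2)Q(-\sigma_1,\sigma_2)Q(-\sigma_1,-\sigma_2)$, with $Q$ the left-hand side of \eqref{H30*1}; being invariant under $\sigma_1\mapsto-\sigma_1$ and $\sigma_2\mapsto-\sigma_2$ separately, it is a polynomial in $\sigma_1^2,\sigma_2^2$, into which \eqref{sigmas} substitutes to clear all radicals, giving the claimed multi-quartic (total degree $8$ in the fields) quad equation.

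For the asymmetric statement I would exhibit a planar reformulation of \eqref{H30*thb}, analogous to the passage from \eqref{cH3} to \eqref{pcH3} in the $\HD_0$ case, so that the coupled equation fits the planar template \eqref{RR}. Applying the asymmetric switch \eqref{abt} (only $\t u\leftrightarrow\t v$ and $\th u\leftrightarrow\th v$) then decouples it, and I would check that the resulting $u$-only equation is exactly $\HD^{*m}_0(p,q)[u]$ in the notation \eqref{mnot}, equation \eqref{tH30*}; by the asymmetric branch of the corollary the system $\HD^{*m}_0([u,v],[\t u,\t v];p,r)=\HD^{*}_0([u,v],[\h u,\h v];q,r)=0$ is then an auto-BT for it.

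The hard part will be the explicit algebra rather than the conceptual scheme. Verifying that the symmetric switch decouples \eqref{H30*thb} into \emph{exactly} \eqref{H30*1}, and that the four-fold sign product collapses to the stated single-valued multi-quartic equation (the $2293$-term polynomial of the footnote), is a substantial computer-algebra computation. A second, more delicate point is that Lemma~\ref{THM} is phrased for a single-valued $Q$: one must check that its relabeling argument remains valid once a branch of each $\sigma_i$ is fixed, so that the torqued cube is CAC branch-by-branch, and that recombining the branches into the polynomial equation does not spoil the auto-BT interpretation. Establishing this compatibility of the sign choices around the cube is where I expect the principal obstacle to lie.
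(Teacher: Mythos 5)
Your treatment of the symmetric half coincides with the paper's: the authors likewise derive \eqref{H30*1} from the tetrahedron property of $\HD^*_0$ at $\delta=0$ (the explicit factor of $u$ in \eqref{H30*thb} together with the factorisation of the products $\sigma_i\sigma_j$), identify it through the substitutions $p\to pr$, $q\to qr$, $v\to u$, $\th v\to\th u$, and relegate the single-valued multi-quartic form to the four-fold sign product described in the footnote. For the asymmetric half, however, the paper does \emph{not} route the argument through Lemma \ref{THM} and a planar reformulation. It instead writes down the full torqued cube system \eqref{tH30*cube} and verifies its consistency directly: each multi-quadratic face equation is replaced by single-valued ``patterns'' such as \eqref{th30*1}--\eqref{h30*1} involving the auxiliary quantities $\sigma_1,\sigma_2,\sigma_3$ and their shifts $\h{\sigma}_1,\b{\sigma}_1$; the values $\th u,\t v,\h v$ and then $\th v$ are computed explicitly; and the two resulting expressions for $\th v$ are reconciled by means of the identity $p^2\sigma_2\sigma_3\t{\sigma_2\sigma_3}=qr\sigma_1^2\h{\sigma}_1\b{\sigma}_1$. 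That identity is exactly the ``compatibility of sign choices around the cube'' which you flag in your final paragraph as the principal obstacle.

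That obstacle is therefore a genuine gap in your plan rather than a technicality: Lemma \ref{THM} is proved by a relabelling argument for single-valued face equations, and its corollary (planarity implies an asymmetric torqued auto-BT) does not transfer to a multi-quadratic system until one fixes a branch of each $\sigma_i$ on every face and shows that the branches can be chosen coherently on all six faces simultaneously. The paper's explicit computation is precisely the substitute for that missing argument; without it, or an extension of Lemma \ref{THM} to the single-valued $\sigma$-system, the asymmetric claim is not established. A secondary slip: the planar equation relevant to the asymmetric torque cannot be a reformulation of the tetrahedron relation \eqref{H30*thb}, which is the input to the \emph{symmetric} case; it would have to be an equation on the stencil $u,\h u,\t v,\th v$ whose switch $\t u\leftrightarrow\t v$, $\th u\leftrightarrow\th v$ returns $\HD^{*m}_0(p,q)[u]$. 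The paper never exhibits such a planar form, which is a further sign that the direct consistency verification, not the planarity corollary, is the mechanism actually needed here.
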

\begin{proof}
If we take $\delta=0$ and switch $\t u \leftrightarrow \th u$, the equation (\ref{tH30*}) is obtained by replacing $p\rightarrow pq$ in (\ref{H30*}), while (\ref{H30*1}) is obtained by replacing $p\rightarrow pr, q\rightarrow qr, v\rightarrow u, \th v\rightarrow \th u $ in (\ref{H30*thb}). The symmetric claim follows from the tetrahedron property of $\HD^{*}_0$. Next, consider the cube system
\begin{equation}\label{tH30*cube}
\begin{split}
\HD^{*m}_0([u,\h u],[\t u,\th u];\, p,q)=0,~~~~\HD^{*m}_0([u,v],[\t u,\t v];\, p,r)=0,~~~~\HD^{*}_0([u,v],[\h u,\h v];\, q,r)=0,\\
\HD^{*m}_0([v,\h v],[\t v,\th v];\, p,q)=0,~~~~\HD^{*m}_0([\h u,\h v],[\th u,\th v];\, p,r)=0,~~~~\HD^{*}_0([\t u,\t v],[\th u,\th v];\, q,r)=0.
\end{split}
\end{equation}
In fact, the equation $\HD^{*m}_0([u,\h u],[\t u,\th u];\, p,q)=0$ can be written in the other patterns, i.e.
\begin{align}
&p(u\h u-\t u\th u)-(u\th u-\t u\h u)+2\sigma_1(\h u-\th u)=0, \label{th30*1}\\
   &p(u\h u-\t u\th u)+(u\th u-\t u\h u)+2\h{\sigma}_1(u-\t u)=0, \label{th30*2}
\end{align}
and for the case $\HD^{*m}_0([u,v],[\t u,\t v];\, p,r)=0$ we have
\begin{align}
&p(uv-\t u\t v)-(u\t v-\t uv)+2\sigma_1(v-\t v)=0, \label{th30uv*1}\\
   &p(uv-\t u\t v)+(u\t v-\t uv)+2\b{\sigma}_1(u-\t u)=0, \label{th30uv*2}
\end{align}
with $\b{\sigma}_1^2=pv\t v$.
Likewise, $\HD^{*}_0([u,v],[\h u,\h v];\, q,r)=0$ is equivalent to
\begin{equation}
   u(q(uv-\h u\h v)+r(u\h u-v\h v))-2(u-\h v)\sigma_2\sigma_3=0. \label{h30*1}
\end{equation}
From equation \eqref{th30*1}, \eqref{th30uv*1} and \eqref{h30*1} we derive
\begin{equation}\label{tH30*th}
\th u=\frac{pu+\t u+2\sigma_1}{p \t u+u+2\sigma_1}\,\h u,~~
\t v=\frac{pu+\t u+2\sigma_1}{p \t u+u+2\sigma_1}\,v,~~
\h v=\frac{quv+ru\h u-2\sigma_2\sigma_3}{qu\h u+ruv-2\sigma_2\sigma_3}\,u.
\end{equation}
Making use of equations \eqref{th30*2} and \eqref{th30uv*2}, we have
\begin{equation}\label{tH30*s1}
\h{\sigma}_1=-\frac{p(u\h u-\t u\th u)+u\th u-\t u\h u}{2(u-\t u)},~~
\b{\sigma}_1=-\frac{p(uv-\t u\t v)+u\t v-\t u v}{2(u-\t u)}.
\end{equation}
Substituting \eqref{tH30*th} and \eqref{tH30*s1} in
$\HD^{*m}_0([v,\h v],[\t v,\th v];\, p,q)=0,~\HD^{*m}_0([\h u,\h v],[\th u,\th v];\, p,r)=0$, i.e.
$$
p(v\h v-\t v\th v)-(v\th v-\t v\h v)+2\b\sigma_1(\h v-\th v)=0,~~p(\h u\h v-\th u\th v)-(\h u\th v-\th u\h v)+2\h\sigma_1(\h v-\th v)=0,
$$
we can determine
\begin{equation}\label{tH30*thb}
\th v=\frac{quv+ru\h u-2\sigma_2\sigma_3}{qu\h u+ruv-2\sigma_2\sigma_3}\,\t u.
\end{equation}
Based on \eqref{tH30*th} and the relation
$
p^2\sigma_2\sigma_3\t{\sigma_2\sigma_3}=qr\sigma_1^2\h{\sigma}_1\b{\sigma}_1
$,
solving $\HD^{*}_0([\t u,\t v],[\th u,\th v];\, q,r)=0$,
we derive the equivalent value
\[
\th v=\frac{2qru\h uv-\sigma_2\sigma_3(qv+r\h u)}{2qru\h uv-\sigma_2\sigma_3(q\h u+rv)}\,\t u.
\]
This shows that the cube system (\ref{tH30*cube}) is consistent.
\end{proof}

\section{Concluding remarks}\label{sec-5}
Given an auto-BT which is $u\leftrightarrow v$ symmetric, we have shown how to produce another auto-BT for a different equation. Thus, our method provides new relationships between equations and their auto-BTs. The crucial property which lies at the heart of the symmetric torqued auto-BTs is the tetrahedron property, see Figure \ref{tetrah}. The corresponding key property for the asymmetric torqued auto-BTs is the planar property, illustrated in Figure \ref{planar}. By torquing the natural ABS auto-BTs we obtained auto-BTs for an ABS equation of type $Q$, in the symmetric case, or a t-ABS equation, in the asymmetric case. Thus, each ABS equation has a torqued variant (additive or multiplicative):
\begin{align*}
\HE^a:\quad &(u-\t{u})(\h{\t{u}}-\h{u})-p=0,\\
\HT^a:\quad &(u-\t{u})(\h{\t{u}}-\h{u})-p(u+\t{u}+\h{u}+\h{\t{u}}+p+2q)=0,\\
\HD^m_\delta:\quad &p(u\h{\t{u}}+\h{u}\t{u})-(u\h{u}+\t{u}\h{\t{u}})+\delta(p^2-1)q=0,\\
\AE^a_\delta:\quad &(p+q)(u+\h{u})(\t{u}+\h{\t{u}})-q(u+\h{\t{u}})(\h{u}+\t{u})-\delta^2pq(p+q)=0,\\
\AT^m:\quad &p(1-q^2)(u\h{\t{u}}+\h{u}\t{u})-(1-p^2q^2)(u\h{u}+\t{u}\h{\t{u}})
-(p^2-1)q(1+u\t{u}\h{u}\h{\t{u}})=0,\\
\QE^a_\delta:\quad &(p+q)(u-\h{u})(\h{\t{u}}-\t{u})-q(u-\h{\t{u}})(\h{u}-\t{u})+\delta^2pq(p+q)=0,\\
\QT^a: \quad &(p+q)(u-\h{u})(\h{\t{u}}-\t{u})-q(u-\h{\t{u}})(\h{u}-\t{u})
+pq(p+q)(u+\t{u}+\h{u}+\h{\t{u}}-p^2+pq-q^2)=0,\\
\QD^m_\delta:\quad &p(1-q^2)(u\h{u}+\t{u}\h{\t{u}})-(1-p^2q^2)(u\h{\t{u}}+\h{u}\t{u})\\
& ~~~~ -(p^2-1)q\left(u\t{u}+\h{u}\h{\t{u}}+\frac{\delta^2(1-p^2q^2)(1-q^2)}{4pq^2}\right)=0,\\
\QV^a:\quad &\mathrm{sn}(p+q)(u\h{\t{u}}+\h{u}\t{u})-\mathrm{sn}(q)(u\h{u}+\t{u}\h{\t{u}})
+\mathrm{sn}(p)\Big(k\,\mathrm{sn}(p+q)\mathrm{sn}(q)(u\t{u}\h{u}\h{\t{u}}+1)
-\h{\t{u}}\h{u}-u\t{u} \Big)=0.
\end{align*}
Our results imply that each t-ABS equation is integrable in the sense that it is CAC, and hence it has an auto-BT, and admits a Lax-representation. The consistent cubes in which they appear are given in Tables \ref{Tafel1} and Table \ref{Tafel2}. For example, the second item in Table \ref{Tafel1} yields the following two consistent cubes, cf. Figure \ref{1a}:
\[
A = \HT^a(p,r) = 0, \qquad
B = \HT(q,r) = 0, \qquad
Q = \HT^a(p,q) = 0,
\]
where the auto-BT $A=B=0$ is asymmetric, and
\[
A = \HT^a(p,r) = 0, \qquad
B = \HT^a(q,r) = 0, \qquad
Q = \QE_1(p,q) = 0,
\]
where the auto-BT $A=B=0$ is symmetric. The torqued equations which appear in these cubes can be represented as in Figure \ref{ris}, and they can be extended to 4D cubes, cf. Figure \ref{4DC}, representing the superposition principles, which take the form $\HT([u,z],[v,w];\ r,s)=0$ for each of the above cases.

We have rediscovered all symmetric auto-BTs given in \cite[Table 2]{Atk08}. They are given by the second and third item in Table \ref{Tafel1} and the first two items in Table \ref{Tafel2} (right column). Thus we have shown how such auto-BTs arise from the natural auto-BTs which are implied by the CAC property of certain ABS equations. The torqued $Q$-equations were found before in \cite{ABS09}, where it was shown that their consistent cubes are equivalent to the natural one. The torqued H-equations are special cases of trapezoidal versions of H$^4$ equations classified in \cite{Boll,RB11}. In Table \ref{Tafel4} we relate our $H$-type consistent cubes to consistent cubes given in \cite{Boll}

\begin{table}[H]
\begin{center}
\begin{tabular}{l|l|l|l|l|l|l}
Table(row,column) & 1(1,1) & 1(1,2) $\QE_0$& 1(2,1) & 1(2,2) & 2(2,1) & 2(2,2) \\
\hline
Eq. in \cite{Boll} \phantom{${{1^1}^1}^1$} & (3.12) & (3.8) & (3.13) & (3.9) & (3.14) & (3.10) \\
\end{tabular}
\caption{\label{Tafel4} Torqued cube systems obtained from the H-equations identified as special cases of asymmetric systems derived  in \cite{Boll}. The element 1(1,1), in row 1, column 1 in Table 1, is $\HE^a$.}
\end{center}
\end{table}

In \cite{GSL,GY2} the trapezoidal H$^4$ and H$^6$ equations, which include the equations $D$, $K$, $P$, and $\HD_\delta(p,0)$, have been shown to be linearisable and Darboux integrable, cf. \cite{AS}. General solutions for those equations were obtained in \cite{GSY}.

The consistent cube in Table 1 with $D$, as well as the one in Table 2 with $K$, do not possess the tetrahedron property and hence fall outside the scope of \cite{Boll,RB11}. The non-natural auto-BTs for $\HD_\delta$, $\AT$ and $\QD_0$ were found in \cite{ZZ}. We have not been able to identify their 3D consistent cubes, nor those of their torqued variants, as presented in Table \ref{Tafel3}, in the classification \cite{Boll,RB11}. As these systems, with the exception of those carrying $P$ and $K$, do admit the tetrahedron property, this indicates that the classification \cite{Boll,RB11} is not complete. To the best of our knowledge, the torqued multi-quadratic equation (\ref{tH30*}) and the multi-quartic quad equation which can be derived from equation (\ref{H30*1}) are new.

By cyclic rotation or interchange of shifts and parameters, cf. \cite[Section 2.1]{JH19} or \cite[Lemma 2.1]{WVZ}, all equations which are part of an auto-BT admit an auto-BT themselves. This implies that each symmetric torqued auto-BT obtained in the current paper can be used to create an (asymmetric) auto-BT for the corresponding t-ABS equation which is different than the asymmetric torqued auto-BT provided here. In \cite{WVZ} such an auto-BT is explicitly given and subsequently used to find a seed and a 1-soliton solution for $\HT^a$. Similarly, auto-BTs for each equation in the auto-BTs in Table \ref{Tafel3} are obtained.

We would like to mention another mechanism which yields alternative auto-BTs, using parameter transformations. Let $A(p,q)=B(p,q)=0$ be an auto-BT for $Q(p,q)=0$. If a transformation $(p,q)\mapsto (p^\dagger,q^\dagger)$ leaves $Q$ invariant, then $A(p^\dagger,r)=B(q^\dagger,r)=0$ provides an alternative auto-BT for $Q(p,q)=0$. The ABS and t-ABS equations admit the parameter scaling symmetries presented in Table \ref{Tafel5}.

\begin{table}[H]
\begin{center}
\begin{tabular}{r|l}
$(p^\dagger,q^\dagger)$ & ABS and torqued ABS equations \\
\hline
$(p+c,q+c)$ & $\HE$ \phantom{${{1^1}^1}^1$}\\
\hline
$(cp,cq)$ & $\HD_0$,\quad $\QE_0$,\quad $\QE_0^a$ \phantom{${{1^1}^1}^1$}\\
\hline
\phantom{${{1^1}^1}^1$} $(-p,-q)$ & $\AE_\delta$,\quad $\AE_\delta^a$,\quad $\QE_\delta$,
\quad $\QE_\delta^a$,\quad $\QT$,\quad $\QT^a$,\quad $\QV$,\quad $\QV^a$ \\
\hline
$(p^{-1},q^{-1})$ & $\AT$,\quad $\AT^m$,\quad $\QD_\delta$,\quad $\QD_\delta^m$ \phantom{${{1^1}^1}^1$}
\end{tabular}
\caption{\label{Tafel5} Parameter scaling symmetries.}
\end{center}
\end{table}
We have considered auto-BTs which are $u\leftrightarrow v$ symmetric. It is also possible to torque (non-auto) BTs between two equations, however, in the resulting consistent cube system, equations on opposite faces may not be related by a shift on the lattice. For example, the BT between $\HE$ and $\HT$ \cite[Table 3]{Atk08} has the tetrahedron property. Performing a symmetric torque gives rise to
\begin{equation} \label{bta} \tag{62a}
\begin{split}
A=2u\t v - p - \t u - v,\quad &B=2u\h v - q - \h u - v \\
A^\ast=2\th u \h v - p - \h u - \th v,\quad &B^\ast=2\th u \t v - q - \t u - \th v
\end{split}
\end{equation}
which is not a BT between
\begin{equation} \label{btb} \tag{62b}
Q=(\h u - \t u)(u - \th u)+(p - q)(u + \th u)
\end{equation}
and
\begin{equation} \label{btc} \tag{62c}
Q^\ast=(\h v - \t v)(v - \th v)+(p - q)(\h v + \t v),
\end{equation}
as $A^\ast\neq \widehat{A}$ and $B^\ast\neq \widetilde{B}$. Instead, a BT is obtained on the black-white lattice, similar to \cite[Proposition 4.1]{XP}, cf. \cite[Remark 2.2]{ZVZ}. Other equations with BTs which have the tetrahedron or planarity property may exist and will be left for future investigation.

Finally, we note that multi-component t-ABS equations with corresponding auto-BTs exist, and can be constructed using the technique explained in \cite{ZVZ}.

\subsection*{Acknowledgements}
The authors are thankful for detailed comments and questions from the referees. Financial support was provided by a La Trobe University China studies seed-funding research grant,
by the department of Mathematics and Statistics of La Trobe University, by the NSF of China
[Grants 11631007, 11875040, 11801289], by the Science and technology innovation plan of Shanghai [No.20590742900], and by the K.C. Wong Magna Fund in Ningbo University.


\begin{thebibliography}{10}
\bibitem{ABS03} V.E. Adler, A.I. Bobenko, Y.B. Suris,
         Classification of integrable equations on quad-graphs, the consistency approach,
         Commun. Math. Phys. 233 (2003) 513--543.

\bibitem{ABS09} V.E. Adler, A.I. Bobenko, Y.B. Suris, Discrete nonlinear hyperbolic equations. Classification of integrable cases, Funct. Anal. Appl. 43 (2009) 3--17.

\bibitem{AS}
V.E. Adler, S.Y. Startsev, Discrete analogues of the Liouville equation, Theor. Math. Phys. 121(2) (1999) 1484--1495.

\bibitem{Atk08}
J. Atkinson, B\"acklund transformations for integrable lattice
equations, J. Phys. A: Math. Theor. 41 (2008) 135202 (8pp).

\bibitem{AtkN-IMRN-2014}  J. Atkinson,  M. Nieszporski,
         Multi-quadratic quad equations: integrable cases from a factorized-discriminant hypothesis,
         Int. Math. Res. Not. 2014 (2014) 4215--4240.

\bibitem{Bac}
A.V. B\"acklund, Om ytor med konstant negativ kr\"okning, Lunds Univ. \r{A}rsskrift 19 (1883) 1--48.

\bibitem{Bia}
L. Bianchi, Sulla trasformazione di B\"acklund per le superficie pseudosferiche, Rend. Lincei 1 (1892) 41--48.

\bibitem{Boll}
R. Boll, Classification and Lagrangian Structure
of 3D Consistent Quad-Equations, PhD thesis, Technische Universit\"at Berlin, 2012.

\bibitem{RB11}  R. Boll,
         Classification of 3D consistent quad-equations,
         J. Nonl. Math. Phys.   18 (2011) 337--365. Corrigendum classification of 3D consistent quad-equations,  J. Nonl. Math. Phys.   19 (2012) 1292001 (3pp).

\bibitem{RB13}  R. Boll, On Bianchi permutability of Bäcklund transformations for asymmetric quad-equations, J. Nonl. Math. Phys. 20(4) (2013) 577--605.

\bibitem{BH}
S. Butler, M. Hay, Two definitions of fake Lax pairs,
AIP Conference Proceedings 1648 (2015) 180006 (4pp).


\bibitem{GY1}
R.N. Garifullin, R.I. Yamilov,
Integrable discrete nonautonomous quad-equations
as B\"acklund auto-transformations for known Volterra
and Toda type semidiscrete equations, J. Phys.: Conf. Ser. 621 (2015) 012005 (18pp).

\bibitem{GSL}
G. Gubbiotti, C. Scimiterna, D. Levi, Algebraic entropy, symmetries
and linearization of quad equations consistent on the cube,
J. Nonl. Math. Phys. 23(4) (2016) 507--543.

\bibitem{GY2}
G. Gubbiotti, R.I. Yamilov, Darboux integrability of trapezoidal H$^4$
and H$^4$ families of lattice equations I: first integrals, J. Phys. A: Math. Theor. 50 (2017) 345205 (26pp).

\bibitem{GSY} G. Gubbiotti, C. Scimiterna, R.I. Yamilov,
       Darboux integrability of trapezoidal H$^4$ and H$^6$ families of lattice equations II: general solutions,
       SIGMA 14 (2018) 008 (51pp).

\bibitem{MH}
M. Hay, Discrete Lax Pairs, Reductions and Hierarchies, PhD thesis, University of Sydney, 2008.

\bibitem{Hie}
J. Hietarinta, Searching for CAC-maps, J. Nonl. Math. Phys. 12(2) (2005) 223--230.

\bibitem{JH19}
J. Hietarinta, Search for CAC-integrable homogeneous quadratic
triplets of quad equations and their classification by BT and Lax, J. Nonl. Math. Phys. 26(3) (2019) 358--389.

\bibitem{HV}
J. Hietarinta, C. Viallet, Weak Lax pairs for lattice equations, Nonlinearity 25 (2012) 1955--1966.

\bibitem{HZ10} J. Hietarinta, D.J. Zhang,
    Multisoliton solutions to the lattice Boussinesq equation,
    J. Math. Phys. 51 (2010) 033505 (12pp).

\bibitem{HZ09} J. Hietarinta, D.J. Zhang,
         Soliton solutions for ABS lattice equations: II. Casoratians and bilinearization,
         J. Phys. A: Math. Theor. 42 (2009) 404006 (30pp).

\bibitem{HZ11} J. Hietarinta, D.J. Zhang,
    Soliton taxonomy for a modification of the lattice Boussinesq equation, SIGMA 7 (2011) 061 (14pp).

\bibitem{L}
D. Levi, Nonlinear differential difference equations as B\"acklund
transformations, J. Phys. A: Math. Gen. 14 (1981) 1083--1098.

\bibitem{LB}
D. Levi, R. Benguria, B\"acklund transformations and nonlinear differential-difference equations, Proc. Nati. Acad. Sci. USA
77(9) (1980) 5025--5027.


\bibitem{LY}
D. Levi, R. I. Yamilov, On a nonlinear integrable difference equation on the square,
Ufimsk. Mat. Zh. 1 (2009) 101--105.

\bibitem{Nij02} F.W. Nijhoff,
         Lax pair for the Adler (lattice Krichever-Novikov) system,
         Phys. Lett. A 297 (2002) 49--58.

\bibitem{NQC} F.W. Nijhoff, G.R.W. Quispel, H.W. Capel,
         Direct linearization of nonlinear difference-difference equations,
         Phys. Lett. A 97(4) (1983) 125--128.

\bibitem{NW01} F.W. Nijhoff, A.J. Walker,
         The discrete and continuous Painlev\'{e} VI hierarchy and the Garnier systems,
         Glasgow Math. J. {43A} (2001) 109--123.

\bibitem{RS}
C. Rogers, W.K. Schief, B\"acklund and Darboux Transformations; Geometry and Modern Applications in Soliton Theory,
Camb. Univ. Press, Cambridge, 2002.

\bibitem{WE}
H.D. Wahlquist, F.B. Estabrook,
B\"acklund transformation for solutions of the Korteweg-de Vries equation,
Phys. Rev. Lett. 31(23) (1973) 1386--1390.

\bibitem{WVZ} X.L. Wei, P.H. van der Kamp, D.J. Zhang,
         Integrability of auto-B\"acklund transformations, and solutions of a torqued ABS equation, Commun. Theor. Phys. 73 (2021) 075005 (5pp).


\bibitem{XP}
P. D. Xenitidis, V. G. Papageorgiou,
Symmetries and integrability of discrete equations defined on a black-white lattice, J. Phys. A: Math. Theor. 42 (2009) 454025 (13pp).

\bibitem{ZVZ} D.D. Zhang, P.H. van der Kamp, D.J. Zhang,
        Multi-component generalisation of CAC systems,
        SIGMA 16 (2020) 060 (30pp).

\bibitem{ZZ} D.D. Zhang, D.J. Zhang,
Addition formulae, B\"acklund transformations, periodic solutions, and quadrilateral equations,
Front. Math. China 14(1) (2019) 203--223.
\end{thebibliography}
\end{document}